\newcommand{\BibTeX}{\rm B\kern-.05em{\sc i\kern-.025em b}\kern-.08em\TeX}
\DeclareMathOperator*{\argmax}{arg\,max}
\DeclareMathOperator*{\argmin}{arg\,min}
\newtheorem{theorem}{Theorem}
\newtheorem{definition}[theorem]{Definition}
\newtheorem{corollary}[theorem]{Corollary}
\newtheorem{proposition}[theorem]{Proposition}
\newtheorem{property}[theorem]{Property}%
\newtheorem{remark}[theorem]{Remark}
\date{}
\title{Learning Equilibria in Mean-Field Games:\\ Introducing Mean-Field PSRO}
\author{
\small{Paul Muller\thanks{Deepmind, corresponding author: pmuller@deepmind.com}} \and \small{Mark Rowland\footnotemark[1]} \and \small{Romuald Elie\footnotemark[1]} \and \small{Georgios Piliouras\thanks{Singapore University of Technology and Design}} \and \small{Julien Perolat\footnotemark[1]} \and \small{Mathieu Lauriere\thanks{Google Brain}} \and \small{Raphael Marinier\footnotemark[3]} \and \small{Olivier Pietquin\footnotemark[3]} \and \small{Karl Tuyls\footnotemark[1]}}
\begin{document}

\maketitle 

\begin{abstract}
    Recent advances in multiagent learning have seen the introduction of a family of algorithms that revolve around the population-based training method PSRO, showing convergence to Nash, correlated and coarse correlated equilibria. 
    Notably, when the number of agents increases, learning best-responses becomes exponentially more difficult, and as such hampers PSRO training methods. 
    The paradigm of mean-field games provides an asymptotic solution to this problem when the considered games are anonymous-symmetric.
    Unfortunately, the mean-field approximation introduces non-linearities which prevent a straightforward adaptation of PSRO.
    Building upon optimization and adversarial regret minimization, this paper sidesteps this issue and introduces mean-field PSRO, an adaptation of PSRO which learns Nash, coarse correlated and correlated equilibria in mean-field games. 
    The key is to replace the exact distribution computation step by newly-defined mean-field no-adversarial-regret learners, or by black-box optimization. 
    We compare the asymptotic complexity of the approach to standard PSRO, greatly improve empirical bandit convergence speed by compressing temporal mixture weights, and ensure it is theoretically robust to payoff noise.
    Finally, we illustrate the speed and accuracy of mean-field PSRO on several mean-field games, demonstrating convergence to strong and weak equilibria.
\end{abstract}

\section{Introduction}

This paper introduces a new mean-field reinforcement learning algorithm, Mean-Field Policy Space Response Oracles (MF-PSRO), guaranteed to converge to Nash, correlated and coarse-correlated equilibria in a large variety of games, without any hypothesis thereupon. 
Policy Space Response Oracles (PSRO) \citep{lanctot2017unified} is originally a two-player zero-sum game algorithm meant to be a generalization of double-oracle \citep{mcmahan2003planning}, fictitious play \citep{brown1951iterative}, and independent reinforcement learning \citep{matignon2012independent}. The algorithm's main loop is composed of two steps: given a policy set, compute an optimal distribution of play. Then, compute a best-response to this distribution, add it to the set and re-iterate.
Remarkably, recent years have shown the algorithm's versatility by demonstrating great advances in learning $N$-player equilibria using PSRO-derived approaches, managing to converge towards $\alpha$-Rank~\cite{omidshafiei2019alpharank,NEURIPS2019_510f2318} optimal strategy cycles~\cite{muller2020generalized}, or towards (coarse) correlated equilibria\footnote{A broad relaxation of Nash, correlated equilibria that is closely connected to regret minimization. It is sometimes referred to as Hannan consistency~\cite{nisan2007algorithmic,hart2013simple,monnot2017limits}.}~\cite{marris2021multiagent}. However, both latter approaches' convergence results rely on potentially fully exploring the space of deterministic strategies, which grows exponentially in the number of players. 
Computing a best response in the general case of randomized opponent strategies also becomes exponentially more complex as the number of players increases, even with symmetric simplifications such as anonymity~\cite{verma2020entropy}, centralized settings~\cite{khan2018scalable}, or fully cooperative settings~\cite{oroojlooyjadid2021review}. Although anonymity can allow polynomial-time approximation schemes for computing approximate Nash equilibria~\cite{daskalakis2007computing,daskalakis2008discretized}, in practice such algorithms are typically too slow for real life applications. A more promising way to address such complexity issues is by approximation in the case of symmetric games by considering asymptotic versions thereof, where the number of players is infinite and only their distribution matters: mean-field games~\cite{huang2006large,lasry2007mfg}.

The question of learning Nash equilibria in mean-field games has been receiving a growing amount of attention, and many methods have been recently proposed. Among these, we can distinguish those relying on fixed-point contraction~\cite{anahtarci2020q,guo2020general,xie2020provable}, fictitious-play~\cite{cardaliaguet2015learning, hadikhanloo2018finite, elie2020convergence,perrin2020fictitious} or online mirror descent~\cite{perolat2021scaling}.
Comparatively, learning correlated and coarse correlated equilibria in mean-field games has not yet, to the best of our knowledge, been studied. The literature has only started introducing  notions of mean-field correlated equilibria~\cite{campi2021correlated,deglinnocenti2018correlated}.

Our central question is: \textit{What are the modifications required for PSRO to successfully converge towards Nash, correlated and coarse correlated equilibria in mean-field games?}

In order to answer it, after introducing the framework of interest (Section \ref{sec:definitions}), we expand on the obstacles encountered when attempting to adapt PSRO to mean-field games (Section \ref{section:challenges_scaling}), identify and treat the cases where a straightforward adaptation is possible, then generalize the setting to mean-field games with finite states, actions and times. Note that the general treatment is fundamentally different for Mean-Field Nash equilibria (Section \ref{section:nash_convergence}), and for Mean-Field (coarse) correlated equilibria (Section \ref{section:cce_convergence}). Finally, we test our algorithms on a number of OpenSpiel~\cite{lanctot2020openspiel} games in Section \ref{section:experiments}, demonstrating convergence, and, where possible, comparing with alternative benchmarks. %

\section{Background}\label{sec:definitions}

\subsection{Definitions}

Given a set $\mathcal{Y}$, we name $\Delta(\mathcal{Y})$ the set of distributions over $\mathcal{Y}$. 

A game is a tuple $(\mathcal{X}, \mathcal{A}, r, P, \mu_0)$ where $\mathcal{X}$ is the finite set of states, $\mathcal{A}$ is the finite set of actions, $r: \mathcal{X} \times \mathcal{A} \times \Delta(\mathcal{X}) \rightarrow \mathbb{R}$ is the reward function where $\Delta(\mathcal{X})$ is the set of probability distributions over $\mathcal{X}$, $P: \mathcal{X} \times \mathcal{A} \rightarrow \mathcal{X}$ is the state transition function, assumed not to depend over $\Delta(\mathcal{X})$, $\mu_0 \in \Delta(\mathcal{X})$ is the initial state occupancy measure. 

We take $\Pi$ the set of deterministic policies, which is \emph{finite} and whose convex hull spans all game policies, a policy being a function $\pi: \mathcal{X} \rightarrow \Delta(\mathcal{A})$. 

We name $J$ the expected payoff function $$J(\pi, \mu) := \sum_{x \in \mathcal{X}, a \in \mathcal{A}} \mu^\pi(x) \pi(x, a) r(x, a, \mu)$$ where $\mu^\pi$ is the expected state occupancy measure of a representative player playing policy $\pi$. We note that in Mean-Field games, a single player has no influence on the reward function, since it has no influence on the state distribution $\mu$: only behavior changes that are wide enough to modify the player state distribution $\mu$ can change the MDP's reward function. State occupancy measures can be defined in several ways, and our derivations apply to all: 
\begin{itemize}[leftmargin=0.35cm]
    \item $\gamma$-discounted: $\mu^\pi(x)\! =\! \mu_0(x) + \gamma \sum\limits_{x' \in \mathcal{X}}\sum\limits_{a \in \mathcal{A}}\! p(x | x', a) \pi(x', a) \mu^\pi(x')$
    \item Finite-horizon: $\mu^\pi_{s+1}(x) = \sum\limits_{x' \in \mathcal{X}}\sum\limits_{a \in \mathcal{A}} p(x | x', a) \pi(x', a) \mu_s^\pi(x')\;\;\;$ with $\mu^\pi_0 = \mu_0$ (in which case another summation term over $s$ appears in $J$, or we assume states to contain current time $s$). 
\end{itemize}

Given policies $\pi_1, ..., \pi_n \in \Pi$, we call \textbf{restricted game} the stateless game where players choose one policy among $\{\pi_i | 1 \leq i \leq n \}$ at the beginning of the game, then keep playing it until the end. 

We also define \textbf{meta-games}, which are normal-form games whose payoff matrix for player 1 is, at row $i$ and column $j$, $J(\pi_i, \mu^{\pi_j})$ - and the transpose thereof for player 2. The complex relationship between these notions, which are equivalent in $N$-player games, is explored in Section \ref{section:challenges_scaling}. 

A \textbf{correlation device}, a notion introduced in~\cite{muller2022learning}, $\rho$ is a distribution over distributions of policies: $\rho \in \Delta(\Delta(\Pi))$, where $\Delta(\Pi)$ is the set of distribution over $\Pi$. It is used to sample population distributions $\nu \in \Delta(\Pi)$, from which individual population recommendations $\pi$ are in turn sampled: the distribution of policies over the whole mean-field population follows $\nu$ with probability $\rho(\nu)$. Given a sequence of distributions $(\nu_t)_t$ and a distribution $(\rho_t)_t$ over them, we write $(\rho_t, \nu_t)_t$ the correlation device recommending $\nu_t$ with probability $\rho_t$.

The \textbf{empirical play} of a sequence  $\nu_1, ..., \nu_T$ is the correlation device which uniformly selects one of the joint members of the sequence: $\forall 1 \leq t \leq T, \; \rho(\nu_t) = \frac{1}{T}$.

We write $\mu(\nu)$ the state occupancy measure of the population when policies are distributed according to $\nu$. In our case, the dynamics $p$ do not depend on $\mu$, so we have that $\mu(\nu) = \sum_\pi \nu(\pi) \mu^\pi$. We note that we restrict ourselves to the fully discrete setting. We also write $\pi(\nu)$ the policy resulting from sampling an initial policy according to $\nu$ and playing it until the end of the game. 

\subsection{Mean-field equilibria}

\begin{definition}[mean-field Nash equilibrium] 
    A \textbf{mean-field Nash equilibrium} (\textbf{MFNE}) is a policy $\pi$ such that, when the whole population plays $\pi$, no agent has an incentive to deviate, ie. $$ J(\pi', \mu^\pi) - J(\pi, \mu^\pi) \leq 0\,, \;\, \forall \pi' \in \Pi\,. $$
\end{definition}

The following two equilibria have been introduced by~\citet{muller2022learning}, and we refer the reader to this paper for more details on, and justifications of, their formulations. 

\begin{definition}[mean-field coarse-correlated equilibrium]
    A \textbf{mean-field coarse-correlated equilibrium} (\textbf{MFCCE}) is a correlation device $\rho$ from which players do not have an incentive to deviate before being given their recommendations, ie. $$ \mathbb{E}_{\nu \sim \rho, \pi \sim \nu} \left[ J(\pi', \mu(\nu)) - J(\pi, \mu(\nu)) \right] \leq 0\,, \;\, \forall \pi' \in \Pi. $$
\end{definition}

\begin{definition}[mean-field correlated equilibrium]
    A \textbf{mean-field correlated equilibrium} (\textbf{MFCE}) is a correlation device $\rho$ such that players do not have an incentive to deviate even after being given their recommendations, ie. $$ \rho(\pi) \mathbb{E}_{\nu \sim \rho(\cdot | \pi)} \left[ J(\pi', \mu(\nu)) - J(\pi, \mu(\nu)) \right] \leq 0\,, \;\, \forall \pi, \pi'\in\Pi $$
\end{definition}
where $\rho(\pi) = \sum\limits_{\nu} \nu(\pi) \rho(\nu)$ assuming $\rho$ only has atoms, and $\rho(\nu | \pi) = \frac{\nu(\pi) \rho(\nu)}{\sum\limits_{\nu'} \nu'(\pi) \rho(\nu')}$.

As usual, $\epsilon$-variants of these equilibria are defined by changing $0$ in the r.h.s. of the above inequalities by $\epsilon > 0$: these are approximate equilibria where one may only gain up to $\epsilon$ by deviating unilaterally (For Coarse-Correlated Equilibria) or per-policy (For Correlated Equilibria).

When applied to restricted games, we call these equilibria, restricted MFNE, restricted MFCE and restricted MFCCE.

\subsection{PSRO in $N$-player games}\label{sec:standard-psro}

PSRO~\cite{lanctot2017unified} is a generalization of Double Oracle~\cite{mcmahandoubleoracle}, and as such is an iterated best-response algorithm for computing Nash equilibria in $N$-player games. The algorithm, presented in Algorithm \ref{alg:psro} initiates  with sets containing random policies. At each iteration, an optimal policy distribution is computed over the policy sets, and a best response to this distribution is computed for each player. If all best responses were already in each player's policy set, the algorithm terminates; it continues otherwise.

\begin{algorithm}%
\SetAlgoLined
\KwResult{Policy sets $(\Pi_k^* = \{\pi_1^k, ..., \pi_n^k\})_{k=1..K}$ for all $K$ players, policy distributions $(\nu_k^*)_{k=1..N}$}
 $\forall k, \; \Pi_k^1 = \{ \pi_k^1 \}$ with $\pi_k^1$ any policy, $\nu_k(\pi_k^1) = 1.0$, $n = 1$\;
 \While{$(\Pi_{n+1} \setminus \Pi_n) \neq \emptyset$}{
  $\forall k, \; \Pi_k^{n+1} = \Pi_k^{n} \cup \{ BR_k(\nu) \}$ \;
  $n = n + 1$\;
  Fill payoff tensors $(T_k)_{k=1..K}$: $\forall x_1, ..., x_K, T_k(x_1, ..., x_K) = \text{Payoff}_k(\pi_{x_1}, ..., \pi_{x_K})$\;
  $\nu = \text{Meta-Solver}((T_k)_{k=1..K})$
 }
 \caption{PSRO(Meta-Solver) ($N$-player games)}
 \label{alg:psro}
\end{algorithm}

The original PSRO paper introduced several different meta-solvers (Uniform, Exact Nash and PRD, an approximate Nash solver), all of which were proven to make PSRO converge to a Nash equilibrium in two-player zero-sum games. Recent work has extended convergence to Alpharank~\cite{omidshafiei2019alpharank}-optimal subsets~\cite{muller2020generalized} and to correlated and coarse correlated equilibria~\cite{marris2021multiagent} in $N$-player games when using the right meta-solvers and best-responders. Crucially, the game specified by the payoff tensors that the meta-solver computes an equilibrium form is a normal-form matrix game. This yields a `linearity of evaluation' property; specifically, the payoffs when players make use of mixed strategies are straightforwardly computed from the payoff tensors specifying the payoffs of the pure strategies in the game.

In the rest of this paper, unless otherwise directly specified, we consider $n$ to be the current PSRO iteration.

\section{Challenges in scaling to mean-field games}\label{section:challenges_scaling}

Our central proposal in this paper is a generalisation of PSRO to the mean-field setting. We introduce two distinct algorithms for the computation of either MFNE or MFCE/MFCCE. Both MF-PSRO algorithms are described as   %
Algorithms~\ref{alg:mf-psro-nash} and \ref{alg:mf-psro-c-ce} below.

\begin{algorithm}%
\SetAlgoLined
\KwResult{Policy set  $\Pi^* = \{\pi_1, ..., \pi_n$\}, Policy Distribution $\nu^* \in \Delta(\Pi^*)$ yielding game Nash $\pi(\nu^*)$}
 $\Pi_1 = \{ \pi_1 \}$ with $\pi_1$ any policy, $\nu_1(\pi_1) = 1.0$, $n = 1$\;
 \While{$(\Pi_{n+1} \setminus \Pi_n) \neq \emptyset$}{
  $\Pi_{n+1} = \Pi_{n} \cup \{ BR(\mu^{\pi(\nu_n)}) \}$ \;
 
  $n = n + 1$ \;
  
  $\displaystyle \nu_n = \argmin_{\nu \in \Delta(\Pi_n)} \max_{i= 1,\ldots,n} J(\pi_i, \mu(\nu)) - J(\pi(\nu), \mu(\nu))$ \;
 }
 \caption{MF-PSRO(Nash)}\label{alg:mf-psro-nash}
\end{algorithm}

\begin{algorithm}%
\SetAlgoLined
\KwResult{Policy set  $\Pi^* = \{\pi_1, ..., \pi_n$\}, $\epsilon$-mean-field correlated equilibrium $\rho^* \in \Delta(\Delta(\Pi^*))$}
 $\Pi_0 = \emptyset$, $\Pi_1 = \{ \pi_1 \}$ with $\pi_1$ any policy, $\rho(\delta_{\pi_1}) = 1.0$, $n = 1$\;
 \While{$(\Pi_{n+1} \setminus \Pi_n) \neq \emptyset$}{
  
  (If CE) $\Pi_{n+1} = \Pi_{n} \cup \{ BR_{CE}(\pi_i, \rho_n) \; | \; \pi_i, \; \rho_n(\pi_i) > 0 \}$  \;
  (If CCE) $\Pi_{n+1} = \Pi_{n} \cup BR_{CCE}(\rho_n)$\;
  
  $n = n + 1$\;
  
  (If CE) $\displaystyle \rho_n = \argmin_{\rho \in \Delta(\Delta(\Pi_n))} \mathbb{E}_{\nu \sim \rho, \pi \sim \nu}[\max_{i=1..n} J(\pi_i, \mu(\nu)) - J(\pi, \mu(\nu))]$ \;
  \vspace{0.05cm}
  (If CCE) $\displaystyle \rho_n = \argmin_{\rho \in \Delta(\Delta(\Pi_n))} \max_{i=1,\ldots,n} \mathbb{E}_{\nu \sim \rho, \pi \sim \nu}[ J(\pi_i, \mu(\nu)) - J(\pi, \mu(\nu))]$\;
 }
 \caption{MF-PSRO((C)CE)}\label{alg:mf-psro-c-ce}
\end{algorithm}

These two algorithms have a very similar structure to the PSRO as described for $N$-player  games in Section~\ref{sec:standard-psro}; within the inner loop, a distribution is computed for the restricted game under consideration (either a Nash equilibrium, or a (coarse) correlated equilibrium), and new policies are derived as certain types of best response against the computed equilibrium. Keeping the same insight as~\cite{marris2021multiagent}, we define two different Best Responder functions $BR_{CE}$ and $BR_{CCE}$, for use with MF-PSRO in computing CEs and CCEs, respectively:
\begin{itemize}
    \item $BR_{CCE}(\rho) := \argmax\limits_{\pi^* \in \Pi} \sum_\nu \rho(\nu) J(\pi^*, \mu(\nu))$;
    \item $BR_{CE}(\pi_k, \rho) := \argmax\limits_{\pi^* \in \Pi} \sum_\nu \rho(\nu | \pi_k)  J(\pi^*, \mu(\nu))$.
\end{itemize}
We note that $BR_{CCE}(\rho)$ is the Best Response corresponding to a unilateral deviation from $\rho$, ie. deviating before having been given a recommendation, whereas $BR_{CE}(\pi_k, \rho)$ is the best response generated by deviating from recommendation $\pi_k$. 

Given these proto-algorithms, several important questions are immediately raised. First, are these algorithms guaranteed to return instances of the equilibria they seek to find? This is a purely mathematical question. Second, how should the restricted game equilibria in the inner loop be computed? As described in Section~\ref{sec:standard-psro}, the restricted game in usual applications of PSRO satisfies a `linearity of evaluation' property. Unfortunately, however, this linearity property is lost in the case of mean-field games, in which the representative player's payoff is generally non-linear as a function of the population occupancy measure. This lack of linearity presents a serious barrier in directly applying PSRO to mean-field games, and an important contribution of this paper is how to circumvent this barrier. We do however note that for a limited class of mean-field games, linearity is preserved; we describe the details of this case in Appendix \ref{appendix:linear_special_case}.

The next two sections treat the theoretical and implementation questions raised above for Nash equilibria, and for (coarse) correlated equilibria, in turn.

\section{Convergence to Nash equilibria}\label{section:nash_convergence}

\subsection{Existence and computation of restricted game equilibria}

In the inner loop of MF-PSRO(Nash), an important subroutine is the computation of a mean-field Nash equilibrium for the restricted game; namely, a distribution $\nu \in \Delta(\Pi_n)$ such that $$J(\pi', \mu(\nu)) - J(\pi(\nu), \mu(\nu)) \leq 0\,, \;\, \forall \pi' \in \{ \pi_1, ..., \pi_n \}.$$ We note that if at least one such $\nu$ exists, then the following optimization problem in the inner loop of MF-PSRO(Nash), which minimizes exploitability, will return a Nash equilibrium 
\begin{equation}\label{equation:nash}
    \nu^* = \argmin_{\nu \in \Delta_n} \max_{i= 1...n} J(\pi_i, \mu(\nu)) - J(\pi(\nu), \mu(\nu))    \, .
\end{equation}

Fortunately, the conditions of existence for a Nash equilibrium of the restricted game - so called restricted Nash equilibrium - only require continuity of $r$ with respect to $\mu$, as shown in the following theorem.

\begin{theorem}[Existence of restricted Nash equilibria]\label{theorem:restricted_nash_existence}
    If the reward function of the game is continuous with respect to $\mu$, then there always exists a restricted game Nash equilibrium.
\end{theorem}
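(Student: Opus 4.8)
The plan is to cast the search for a restricted Nash equilibrium as a fixed-point problem on the simplex $\Delta_n := \Delta(\Pi_n)$ and apply Brouwer's (or Kakutani's) fixed-point theorem. The key observation is that, in the restricted game, a population distribution $\nu \in \Delta_n$ induces an occupancy measure $\mu(\nu) = \sum_i \nu(\pi_i)\mu^{\pi_i}$ which depends \emph{linearly}, hence continuously, on $\nu$; composing with the assumed continuity of $r$ in $\mu$, and noting that $J(\pi_i, \mu) = \sum_{x,a}\mu^{\pi_i}(x)\pi_i(x,a)r(x,a,\mu)$ is a finite sum, we get that $\nu \mapsto J(\pi_i, \mu(\nu))$ is continuous for each fixed $i$. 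A restricted Nash equilibrium is precisely a $\nu$ supported on best responses within $\{\pi_1,\dots,\pi_n\}$ to $\mu(\nu)$, i.e. $\nu(\pi_i) > 0 \implies i \in \argmax_{j} J(\pi_j, \mu(\nu))$.

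\textbf{First} I would define the best-response correspondence $\mathrm{BR}: \Delta_n \rightrightarrows \Delta_n$ by $\mathrm{BR}(\nu) = \Delta\big(\argmax_{j=1,\dots,n} J(\pi_j, \mu(\nu))\big)$, the set of distributions supported on the restricted best responses to $\mu(\nu)$. \textbf{Then} I would verify the hypotheses of Kakutani's theorem: $\Delta_n$ is nonempty, compact and convex; $\mathrm{BR}(\nu)$ is nonempty (the $\argmax$ over a finite set is nonempty) and convex (it is a face of the simplex); and $\mathrm{BR}$ has closed graph, which follows from continuity of each $\nu \mapsto J(\pi_j,\mu(\nu))$ — if $\nu_k \to \nu$, $\sigma_k \in \mathrm{BR}(\nu_k)$, $\sigma_k \to \sigma$, then any $i$ in the support of $\sigma$ is in the support of infinitely many $\sigma_k$, hence $J(\pi_i,\mu(\nu_k)) \ge J(\pi_j,\mu(\nu_k))$ for all $j$, and passing to the limit gives $i \in \argmax_j J(\pi_j,\mu(\nu))$. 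Kakutani then yields a fixed point $\nu^* \in \mathrm{BR}(\nu^*)$, which is by construction a restricted Nash equilibrium; \textbf{finally} I would remark that its existence makes the inner minimization of exploitability attain the value $0$, so Eq.~\eqref{equation:nash} returns such a $\nu^*$.

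\textbf{Alternatively}, one can avoid correspondences entirely by using Brouwer via a Nash-style regularized map: set $g_i(\nu) = \max\{0, J(\pi_i,\mu(\nu)) - J(\pi(\nu),\mu(\nu))\}$ and $f_i(\nu) = \frac{\nu(\pi_i) + g_i(\nu)}{1 + \sum_j g_j(\nu)}$; continuity of $f$ follows from continuity of the $J(\pi_i,\mu(\cdot))$ terms together with $J(\pi(\nu),\mu(\nu)) = \sum_i \nu(\pi_i) J(\pi_i,\mu(\nu))$, and a standard argument shows any fixed point of $f$ has all $g_i = 0$, i.e. is a restricted Nash equilibrium. \textbf{The main obstacle} is not any single deep step but making sure the continuity chain is airtight: one must use that $P$ (hence each $\mu^{\pi_i}$) does not depend on $\mu$ so that $\nu \mapsto \mu(\nu)$ is the simple linear map stated in the preliminaries, and that the payoff to the \emph{mixture} $\pi(\nu)$ decomposes as the $\nu$-average of the $J(\pi_i,\mu(\nu))$ (the ``linearity of evaluation'' that \emph{does} survive once $\mu(\nu)$ is frozen, even though $J$ is nonlinear in $\mu$ globally); everything else is a routine invocation of a fixed-point theorem on a simplex.
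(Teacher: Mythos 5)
Your proposal is correct and follows essentially the same route as the paper: apply Kakutani's fixed-point theorem to the best-response correspondence on $\Delta(\Pi_n)$, verifying nonemptiness, convexity, and closed graph via the continuity of $r$ in $\mu$, the linearity of $\nu \mapsto \mu(\nu)$, and the linearity of evaluation once $\mu(\nu)$ is frozen (your $\mathrm{BR}(\nu)$, defined as the face spanned by pure best responses, coincides with the paper's $\argmax$ over mixed $\nu'$ by that same linearity). Your support-based closed-graph argument and the alternative Brouwer/Nash-map construction are fine variants but do not change the substance of the argument.
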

\begin{proof}
    Let $\phi:\Delta(\Pi_n)\rightarrow 2^{\Delta(\Pi_n)}$ be the best-response map in the restricted game characterized by policies in the set $\Pi_n$: $$\forall \nu \in \Delta(\Pi_n), \quad \phi(\nu) := \argmax\limits_{\nu' \in \Delta(\Pi_n)} J(\pi(\nu'), \mu(\nu)).$$ %
    
     $\Delta(\Pi_n)$ is non-empty and convex, together with closed and bounded in a finite-dimensional space, and therefore compact.
     
    For all $\nu\in\Delta(\Pi_n)$, $\argmax\limits_{\nu' \in \Delta(\Pi_n)} J(\pi(\nu'), \mu(\nu)) \subseteq \Delta(\Pi_n)$ because $\Delta(\Pi_n)$ is closed, and $\phi(\nu)$ is therefore non-empty. 
    
    Let $\nu_1, \; \nu_2 \in \phi(\nu)$, $t \in [0, 1]$. 
    $$J(\pi(t \nu_1 + (1-t) \nu_2), \mu(\nu)) = t J(\pi(\nu_1), \mu(\nu)) + (1-t) J(\pi(\nu_2), \mu(\nu))$$ so $t \nu_1 + (1-t) \nu_2 \in \phi(\nu)$ and $\phi(\nu)$ is therefore convex.
    
    The proof of $\text{Graph}(\phi)$ being closed is provided in Appendix \ref{appendix:proof_graph_closedness}. It relies on the fact that since $r$ is continuous in $\mu$, so is $J$, and since the function $\nu \rightarrow J(\pi(\nu), \mu)$ is linear for all $\nu \in \Delta(\Pi_n)$, the function $(\nu_1, \nu_2) \rightarrow J(\pi(\nu_1), \mu(\nu_2))$ is bicontinuous, which is enough to ensure Graph closedness. We have all the hypotheses required to apply Kakutani's fixed point theorem~\cite{kakutani1941generalization}: there thus exists $\nu^* \in \Delta(\Pi_n) \text{ such that } \nu^* \in \phi(\nu^*)$, ie. $\nu^* = \argmax_{\nu'} J(\pi(\nu'), \mu(\nu^*))$, which means that $\forall \nu' \in \Delta(\Pi_n), J(\pi(\nu'), \mu(\nu^*)) \leq J(\pi(\nu^*), \mu(\nu^*))$, in other words: $\nu^*$ is a Nash equilibrium of the restricted game.
\end{proof}

Having established the existence of Nash equilibria for the restricted mean-field game in the inner loop of MF-PSRO(Nash), we now turn to the problem of how such an equilibrium can be (approximately) computed. As remarked earlier, due to the non-linearity of the restricted game, this problem is a non-linear (and potentially non-convex) optimisation problem over $\Delta(\Pi_n)$. Thus, the optimal solution of Equation~\eqref{equation:nash} can be, in the absence of any additional assumptions on the game, found via Black-Box optimization approaches, such as random search~ \cite{solis1981minimization}, Bayesian optimization~\cite{frazier2018tutorial}, evolutionary search (our experiments use CMA-ES~\cite{hansen2016cma}), or any other appropriate method for the considered game. 

\subsection{Convergence to Nash}

The termination condition of PSRO is the following: if at step $N+1$, the new policy $\pi_{n+1}$ produced by the algorithm is in $\Pi_n$, then the algorithm terminates. Given that each $\pi_i$ is a deterministic policy, and that the set of deterministic policies is finite, PSRO will therefore necessarily terminate.
We must only prove one thing:

\begin{proposition}[Termination-optimality] \label{proposition:nash_termination_optimality}
    If MF-PSRO(Nash) terminates, 
    it stops at a Nash equilibrium of the true game.
\end{proposition}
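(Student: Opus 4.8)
The plan is to show that upon termination the policy $\pi(\nu_n)$ returned by the algorithm (where $n$ is the terminating iteration) satisfies the mean-field Nash inequality against \emph{every} deterministic policy $\pi'\in\Pi$, not merely against those in the restricted set $\Pi_n$. The argument rests on three ingredients: (i) a trajectory-level mixture identity that lets us identify $\mu^{\pi(\nu_n)}$ with $\mu(\nu_n)$; (ii) the fact that the inner-loop optimizer $\nu_n$ achieves non-positive restricted exploitability; and (iii) the termination condition, which says the global best response to $\mu(\nu_n)$ is already in $\Pi_n$.

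First I would record the mixture identity. Since the dynamics $p$ do not depend on $\mu$ and $\pi(\nu)$ mixes entire trajectories, $\mu^{\pi(\nu_n)} = \sum_\pi \nu_n(\pi)\mu^\pi = \mu(\nu_n)$, and for any fixed $\mu$ the map $\nu\mapsto J(\pi(\nu),\mu)=\sum_\pi\nu(\pi)J(\pi,\mu)$ is linear --- this is exactly the linearity already exploited in the proof of Theorem~\ref{theorem:restricted_nash_existence}. Consequently the inner-loop objective $\nu\mapsto \max_{i\le n} J(\pi_i,\mu(\nu)) - J(\pi(\nu),\mu(\nu))$ is continuous on the compact set $\Delta(\Pi_n)$ (using continuity of $r$ in $\mu$), so the $\argmin$ defining $\nu_n$ is attained. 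By Theorem~\ref{theorem:restricted_nash_existence} a restricted Nash equilibrium of the game on $\Pi_n$ exists; substituting $\nu'=\delta_{\pi_i}$ into its defining inequality shows its restricted exploitability is $\le 0$, hence the minimal value is $\le 0$, i.e.
\[
J(\pi_i,\mu(\nu_n)) \;\le\; J(\pi(\nu_n),\mu(\nu_n)) \qquad \text{for all } i=1,\dots,n .
\]

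Next I would invoke termination: the loop exits because $\pi_{n+1}=BR(\mu^{\pi(\nu_n)})=\argmax_{\pi'\in\Pi} J(\pi',\mu(\nu_n))$ already lies in $\Pi_n$, say $\pi_{n+1}=\pi_j$ with $j\le n$, so $J(\pi_j,\mu(\nu_n))=\max_{\pi'\in\Pi}J(\pi',\mu(\nu_n))$. Chaining this with the previous bound, for every $\pi'\in\Pi$,
\[
J(\pi',\mu(\nu_n)) \;\le\; J(\pi_j,\mu(\nu_n)) \;\le\; J(\pi(\nu_n),\mu(\nu_n)).
\]
Rewriting $\mu(\nu_n)=\mu^{\pi(\nu_n)}$ gives $J(\pi',\mu^{\pi(\nu_n)}) - J(\pi(\nu_n),\mu^{\pi(\nu_n)}) \le 0$ for all $\pi'\in\Pi$, which is precisely the statement that $\pi(\nu_n)$ is a mean-field Nash equilibrium of the true game.

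The only genuinely delicate step is (ii): one must be certain the inner-loop $\argmin$ attains a value that is actually $\le 0$, not merely infimally so, which is why continuity of $r$ in $\mu$ (giving attainment via compactness) combined with the existence guarantee of Theorem~\ref{theorem:restricted_nash_existence} is essential. If the inner optimization is only solved approximately --- returning $\nu_n$ with restricted exploitability $\le\epsilon$ --- the same chain of inequalities degrades to $J(\pi',\mu^{\pi(\nu_n)}) - J(\pi(\nu_n),\mu^{\pi(\nu_n)}) \le \epsilon$, so one obtains an $\epsilon$-MFNE; I would note this as a remark. Everything else is bookkeeping around the identity $\mu^{\pi(\nu)}=\mu(\nu)$.
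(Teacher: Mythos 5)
Your proof is correct and follows essentially the same route as the paper's: termination places the global best response to $\mu(\nu_n)$ inside $\Pi_n$, and the restricted-Nash property of $\nu_n$ (guaranteed to be attainable by Theorem~\ref{theorem:restricted_nash_existence}) then extends the no-deviation inequality from $\Pi_n$ to all of $\Pi$. The extra care you take in justifying that the inner-loop $\argmin$ actually attains a non-positive exploitability, and in making the identity $\mu^{\pi(\nu_n)}=\mu(\nu_n)$ explicit, fills in steps the paper leaves implicit but does not change the argument.
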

\begin{proof}
    If MF-PSRO(Nash) terminates at step $n$, then
    \begin{align*}
        \pi^* = \argmax\limits_{\pi \in \Pi} J(\pi, \mu(\nu)) \in \Pi_n \, .
    \end{align*}
    Since $\nu$ is a Nash equilibrium of the restricted game by assumption, then necessarily $J(\pi^*, \mu(\nu)) \leq J(\pi(\nu), \mu(\nu))$, and thus $\forall \pi \in \Pi, J(\pi, \mu(\nu)) \leq J(\pi(\nu), \mu(\nu))$, which concludes the proof.
\end{proof}

Using the former discussion and this property, we deduce 

\begin{theorem}[mean-field PSRO convergence to Nash equilibria]
    MF-PSRO(Nash) converges to a Nash equilibrium of the true game.
\end{theorem}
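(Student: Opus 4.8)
The plan is to combine the finiteness-based termination argument sketched just before Proposition~\ref{proposition:nash_termination_optimality} with the termination-optimality statement itself, using Theorem~\ref{theorem:restricted_nash_existence} to guarantee that the inner loop is well-posed at every iteration.

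First I would argue that MF-PSRO(Nash) terminates. The policy sets it produces form a nondecreasing chain $\Pi_1 \subseteq \Pi_2 \subseteq \cdots$, each contained in the finite set $\Pi$ of deterministic policies: every policy added in the inner loop has the form $BR(\mu^{\pi(\nu_n)})$, i.e.\ a best response to a fixed occupancy measure, which amounts to solving a standard MDP and hence can be taken deterministic. A strictly increasing chain of subsets of a finite set must be finite, so there is an iteration $n$ with $\Pi_{n+1} \setminus \Pi_n = \emptyset$, at which the while-loop exits.

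Second I would verify the hypothesis invoked by Proposition~\ref{proposition:nash_termination_optimality}, namely that the $\nu_n$ returned by the inner-loop $\argmin$ is genuinely a Nash equilibrium of the restricted game on $\Pi_n$, not merely an exploitability-minimizer. Here one observes that the restricted-game exploitability $\max_{i \leq n} J(\pi_i, \mu(\nu)) - J(\pi(\nu), \mu(\nu))$ is always nonnegative, since by the linearity of $\nu \mapsto J(\pi(\nu), \mu)$ noted earlier, $J(\pi(\nu), \mu(\nu)) = \sum_i \nu(\pi_i) J(\pi_i, \mu(\nu)) \leq \max_i J(\pi_i, \mu(\nu))$; and by Theorem~\ref{theorem:restricted_nash_existence}, applicable because $r$, and therefore $J$, is continuous in $\mu$, there exists some $\nu \in \Delta(\Pi_n)$ with exploitability $\leq 0$. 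Hence the minimal value of the inner objective is exactly $0$, and $\nu_n$ attains it, so $\nu_n$ is a restricted Nash equilibrium.

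Finally I would apply Proposition~\ref{proposition:nash_termination_optimality} directly: the algorithm halts at some step $n$ with $\nu_n$ a restricted Nash equilibrium, so the proposition yields $J(\pi, \mu(\nu_n)) \leq J(\pi(\nu_n), \mu(\nu_n))$ for all $\pi \in \Pi$, i.e.\ $\pi(\nu_n)$ is a mean-field Nash equilibrium of the true game. I expect the only genuinely delicate point to be the second step --- pinning down that the $\argmin$ produces a true equilibrium rather than a near-equilibrium, which rests on pairing nonnegativity of exploitability with the existence result; termination and the closing deduction are then immediate. A secondary caveat worth flagging is that this is an idealized analysis in which the inner optimization and the best response are solved exactly; robustness to approximate solves and payoff noise is addressed separately in the later sections.
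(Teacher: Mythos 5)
Your proposal is correct and follows essentially the same route as the paper: termination via finiteness of the deterministic policy set, combined with the termination-optimality property (Proposition~\ref{proposition:nash_termination_optimality}), with Theorem~\ref{theorem:restricted_nash_existence} guaranteeing the inner-loop $\argmin$ attains zero exploitability and hence returns a genuine restricted Nash equilibrium. Your second step merely makes explicit (nonnegativity of exploitability plus existence) what the paper asserts in one line before Theorem~\ref{theorem:restricted_nash_existence}, so there is nothing to add.
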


\section{Convergence to (coarse) correlated equilibria}\label{section:cce_convergence}

We now turn our attention to the versions of MF-PSRO that aim to compute mean-field correlated equilibria and mean-field coarse correlated equilibria.

\subsection{Overview}

Computing restricted MF(C)CEs is potentially more involved than computing restricted MFNE; while the optimisation problem defining restricted Nash equilibria is over the finite-dimensional space $\Delta(\Pi_n)$, the optimisation problem defining restricted MF(C)CEs is over the infinite-dimensional space $\Delta(\Delta(\Pi_n))$. One could resort to computing an approximate MFNE (a special case of both MFCE and MFCCE) using the black-box optimisation approach described in the previous section, but it is possible to exploit the structure of the mean-field game to compute approximate MF(C)CEs more efficiently. The approach we pursue is fundamentally based on no-regret learning; we also find opportunities to increase the quality of the approximate equilibrium by post-processing the output of the regret-minimisation algorithm via linear programming; see Figure~\ref{fig:mf-psro-c-ce-overview} for an overview of the techniques at play.

\subsection{Approximate (coarse) correlated equilibria via regret minimisation}\label{subsection:bandit_setting}

Our goal is to approximate an MF(C)CE for the restricted MFG based on the policy set $\Pi_n = \{ \pi_1, \ldots, \pi_n \}$, as required within the inner loop of Algorithm~\ref{alg:mf-psro-c-ce}. Recall that this amounts to solving the optimisation problem
\begin{align*}
    \displaystyle \rho_n = \argmin_{\rho \in \Delta(\Delta(\Pi_n))} \max_{i=1,\ldots,n} \mathbb{E}_{\nu \sim \rho, \pi \sim \nu}[ J(\pi_i, \mu(\nu)) - J(\pi, \mu(\nu))]
\end{align*}
in the case of coarse correlated equilibria, and 
\begin{align*}
    \displaystyle \rho_n = \argmin_{\rho \in \Delta(\Delta(\Pi_n))} \mathbb{E}_{\nu \sim \rho, \pi \sim \nu}[\max_{i=1..n} J(\pi_i, \mu(\nu)) - J(\pi, \mu(\nu))]
\end{align*}
in the case of correlated equilibria. In principle, similar black-box techniques described for approximating Nash equilibria in the previous section may be applied to solve these problems too. However, such an approach is likely to be inefficient in practice, and instead we build on regret-minimisation theory, a classical approach to computing (C)CEs in $N$ player games.

The overall approach relies on the fact that if the population distribution $\mu$ is fixed, the payoff function $\mathbb{E}_{\pi \sim \nu}[J(\pi, \mu)]$ is linear in the distribution $\nu \in \Delta(\Pi_n)$, and we are in fact considering online linear optimisation problems. Focusing first on the case of coarse correlated equilibria, we will make use of Algorithms $\mathbf{A}$ achieving $O(\sqrt{T})$ external regret in online linear optimisation, of the form described in Algorithm~\ref{alg:regret-min}.

\vspace{-0.2cm}
\begin{algorithm}%
\SetAlgoLined
\KwResult{A sequence of predictions $(\nu_t)_{t=1}^T$ such that $\max_{\nu \in \Delta(\Pi_n)} \sum_{t=1}^T R_t(\nu) - \sum_{t=1}^T R_t(\nu_t) = O(\sqrt{T})$.}
 \For{$t=1,2,\ldots,T$}{
    Algorithm makes a prediction $\nu_t \in \Delta(\Pi_n)$\;
    
    Algorithm observes a linear reward function $R_t : \Delta(\Pi_n) \rightarrow \mathbb{R}$\;
    
    Algorithm receives the reward $R_t(\nu_t)$\;
 }
 \caption{Generic form of regret-minimisation algorithm for online linear optimisation on the domain $\Delta(\Pi_n)$.}\label{alg:regret-min}
\end{algorithm}
\vspace{-0.2cm}

We may apply such an algorithm for MF(C)CE computation as shown in Algorithm~\ref{alg:regret-protocol}.

\begin{algorithm}%
\SetAlgoLined
 \For{$t=1,2,\ldots,T$}{
    Representative player selects distribution $\nu_t \in \Delta(\Pi_n)$ using a regret-minimisation algorithm $\mathbb{A}$ based on past loss function $(R_s)_{s=1}^{t-1}$\;
    
    Player observes reward function $R_t(\nu) = \mathbb{E}_{\pi \sim \nu}[J(\pi, \mu(v_t)]$\;
    
    Representative player receives reward $R_t(\nu_t) = \mathbb{E}_{\pi \sim \nu_t}[J(\pi, \mu(v_t)]$\;
 }
 Return empirical average $\rho = \frac{1}{T} \sum_{t=1}^T \delta_{\nu_t}$.
 \caption{Protocol for computing an approximate MF(C)CE via regret-minimisation}\label{alg:regret-protocol}\label{alg:meta_algo_no_regret_conversion}
\end{algorithm}

This algorithm returns the empirical average $\frac{1}{T} \sum_{t=1}^T \delta_{\nu_t}$, which is in fact an approximate MF(C)CE for the restricted game, as the following result shows.

\begin{proposition}
    The empirical average $\rho = \frac{1}{T} \sum_{t=1}^T \delta_{\nu_t}$ returned by Algorithm~\ref{alg:regret-protocol} using a regret-minimisation algorithm $\mathbb{A}$ of the form described in Algorithm~\ref{alg:regret-min}, is a $O(1/\sqrt{T})$-MF(C)CE for the restricted mean-field game.
\end{proposition}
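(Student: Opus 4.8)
The plan is to unwind the definition of the mean-field (coarse) correlated equilibrium exploitability for the specific correlation device $\rho = \frac{1}{T}\sum_{t=1}^T \delta_{\nu_t}$ produced by Algorithm~\ref{alg:regret-protocol}, identify the resulting quantity with the (external, resp.\ internal/swap) regret of the learner $\mathbb{A}$, which is $O(\sqrt{T})$ by hypothesis, and divide by $T$. The crucial structural observation, already flagged in the text, is that once round $t$ is fixed the reward function $R_t(\nu) = \mathbb{E}_{\pi \sim \nu}[J(\pi, \mu(\nu_t))]$ is \emph{linear} in $\nu$ on $\Delta(\Pi_n)$, since $\mu(\nu_t)$ is a fixed occupancy measure not depending on the argument $\nu$. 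Hence the protocol is a genuine instance of online linear optimisation on $\Delta(\Pi_n)$, and a sublinear-regret learner of the form of Algorithm~\ref{alg:regret-min} applies verbatim.

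For the coarse correlated case, I would substitute $\rho = \frac{1}{T}\sum_t \delta_{\nu_t}$ into $\mathbb{E}_{\nu\sim\rho,\pi\sim\nu}[J(\pi',\mu(\nu)) - J(\pi,\mu(\nu))]$ and use $\mathbb{E}_{\pi\sim\nu_t}[J(\pi',\mu(\nu_t))] = J(\pi',\mu(\nu_t)) = R_t(\delta_{\pi'})$ together with $\mathbb{E}_{\pi\sim\nu_t}[J(\pi,\mu(\nu_t))] = R_t(\nu_t)$, obtaining $\frac{1}{T}\bigl(\sum_t R_t(\delta_{\pi'}) - \sum_t R_t(\nu_t)\bigr)$. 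Taking the maximum over $\pi' \in \Pi_n$ and invoking linearity of each $R_t$ to replace $\max_{\pi'\in\Pi_n}\sum_t R_t(\delta_{\pi'})$ by $\max_{\nu\in\Delta(\Pi_n)}\sum_t R_t(\nu)$ (the maximum of a linear functional over a polytope is attained at a vertex), the restricted-game MFCCE exploitability of $\rho$ becomes exactly $\frac{1}{T}\bigl(\max_{\nu\in\Delta(\Pi_n)}\sum_t R_t(\nu) - \sum_t R_t(\nu_t)\bigr) = \frac{1}{T}\cdot O(\sqrt{T}) = O(1/\sqrt{T})$ by the regret guarantee of $\mathbb{A}$.

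For the correlated case I would use the analogous identity $\rho(\pi)\,\mathbb{E}_{\nu\sim\rho(\cdot\mid\pi)}[g(\nu)] = \frac{1}{T}\sum_t \nu_t(\pi)\,g(\nu_t)$, obtained by writing $\rho(\pi) = \frac{1}{T}\sum_t\nu_t(\pi)$ and $\rho(\nu_t\mid\pi) = \nu_t(\pi)/\sum_s\nu_s(\pi)$, so that the per-pair MFCE exploitability is $\frac{1}{T}\sum_t \nu_t(\pi)\bigl(J(\pi',\mu(\nu_t)) - J(\pi,\mu(\nu_t))\bigr)$. By linearity of $R_t$ this equals $\frac{1}{T}\sum_t\bigl(R_t(\nu_t^{\pi\to\pi'}) - R_t(\nu_t)\bigr)$, where $\nu_t^{\pi\to\pi'}$ is $\nu_t$ with the mass on $\pi$ moved onto $\pi'$, which is precisely the internal (swap) regret of the $\pi\to\pi'$ modification rule. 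Running the protocol with a no-internal-regret learner — which also attains $O(\sqrt{T})$ swap regret by classical reductions — and dividing by $T$ yields the $O(1/\sqrt{T})$ bound uniformly over the finitely many pairs $(\pi,\pi') \in \Pi_n \times \Pi_n$.

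The main obstacle is bookkeeping rather than conceptual: one must check that (i) the reward functions $R_t$ are uniformly bounded, so the $O(\sqrt{T})$ regret constant is genuinely game-dependent and does not blow up — this follows from finiteness of $\mathcal{X},\mathcal{A}$ and boundedness of $r$, itself a consequence of continuity of $r$ on the compact set $\Delta(\mathcal{X})$; (ii) the ``adversary'' generating $R_t$ is allowed to be adaptive to $\nu_t$, which is harmless since the regret bound of Algorithm~\ref{alg:regret-min} holds against adaptive adversaries; and (iii) the equilibrium notions in the conclusion are those of the \emph{restricted} game, i.e.\ deviations $\pi'$ range over $\Pi_n$, which is exactly what the vertex-maximisation argument (for CCE) and the finitely-many-modification-rules argument (for CE) deliver.
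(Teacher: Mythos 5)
Your proof is correct and follows essentially the same route as the paper's: both substitute $\rho = \frac{1}{T}\sum_t \delta_{\nu_t}$ into the deviation-benefit expression, recognise the result as $\frac{1}{T}$ times the (external, resp.\ internal) regret of $\mathbb{A}$ against the comparator $\delta_{\pi'}$, and invoke the $O(\sqrt{T})$ regret bound. Your version merely spells out details the paper leaves implicit (the vertex-maximisation step identifying $\max_{\pi'\in\Pi_n}$ with $\max_{\nu\in\Delta(\Pi_n)}$, boundedness of $R_t$, and the swap-regret reduction for the CE case, which the paper dismisses as ``similar'').
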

\begin{proof}
    This is a direct computation. The benefit of the representative player deviating to $\pi_i$ under the correlation device $\rho$ is
    \begin{align*}
        & \mathbb{E}_{\nu \sim \rho}[J(\pi_i, \mu(\nu)) - \mathbb{E}_{\pi \sim \nu}[J(\pi, \mu(\nu))]] \\
        = & \frac{1}{T} \sum_{t=1}^T \left(J(\pi_i, \mu(\nu_t)) - \mathbb{E}_{\pi \sim \nu_t}[J(\pi, \mu(\nu_t))] \right) \\
        = & \frac{1}{T} O\left(\sqrt{T}\right) = O\left(1/\sqrt{T}\right) \, ,
    \end{align*}
    where the penultimate equality follows from the regret-minimising property of algorithm $\mathbb{A}$. The proof for CEs is similar.
\end{proof}

This result establishes a rigorous means of approximating an MF(C)CE in the restricted game considered within the inner loop of mean-field PSRO, and therefore provides an implementable version of mean-field PSRO. By strengthening the regret minimisation algorithm described above to minimise \emph{internal} regret, we obtain a time-average strategy that is an approximate MFCE. In both cases, we have the following correctness guarantee for MF-PSRO.

\begin{theorem}[MF-PSRO Convergence to MF(C)CEs]\label{theorem:mf_psro_convergence}
    MF-PSRO using a no-internal-regret (Respectively no-external-regret) algorithm to compute its MFCE (Respectively MFCCE) with average regret threshold $\epsilon$ and Best-Response Computation $BR_{CE}$ (Respectively $BR_{CCE}$) converges to an $\epsilon$-MFCE (Respectively an $\epsilon$-MFCCE).
\end{theorem}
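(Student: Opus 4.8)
The plan is to mirror the structure of the Nash case (Proposition~\ref{proposition:nash_termination_optimality} and the theorem that follows it), combining a termination argument with a termination-optimality argument, and then bolting on the approximation guarantee coming from the regret-minimisation subroutine. First I would argue termination: since every policy added to $\Pi_n$ is a deterministic policy (the best responses $BR_{CCE}(\rho_n)$ and $BR_{CE}(\pi_i,\rho_n)$ both return elements of the finite set $\Pi$), and $\Pi$ is finite, the chain $\Pi_1 \subseteq \Pi_2 \subseteq \cdots$ is strictly increasing until it stabilises, so the while-loop halts after finitely many iterations. This is routine and identical in spirit to the Nash case.

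Next I would prove the termination-optimality statement for CCEs. Suppose MF-PSRO(CCE) terminates at step $n$ with correlation device $\rho_n$ over $\Delta(\Pi_n)$. By construction $\rho_n$ is computed via the regret-minimisation protocol (Algorithm~\ref{alg:regret-protocol}), so by the preceding Proposition it is an $\epsilon$-MFCCE of the \emph{restricted} game on $\Pi_n$, i.e. $\mathbb{E}_{\nu\sim\rho_n,\pi\sim\nu}[J(\pi_i,\mu(\nu)) - J(\pi,\mu(\nu))] \le \epsilon$ for all $i = 1,\ldots,n$. Termination means $BR_{CCE}(\rho_n) = \argmax_{\pi^*\in\Pi}\sum_\nu \rho_n(\nu) J(\pi^*,\mu(\nu)) \in \Pi_n$. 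Writing $\pi^* = BR_{CCE}(\rho_n) = \pi_j$ for some $j \le n$, the key observation is that for \emph{every} $\pi' \in \Pi$,
\begin{align*}
    \mathbb{E}_{\nu\sim\rho_n}[J(\pi',\mu(\nu))] \;\le\; \mathbb{E}_{\nu\sim\rho_n}[J(\pi^*,\mu(\nu))] \;=\; \mathbb{E}_{\nu\sim\rho_n}[J(\pi_j,\mu(\nu))] \,,
\end{align*}
the inequality being the defining property of $\pi^*$ as the maximiser. Subtracting the common term $\mathbb{E}_{\nu\sim\rho_n,\pi\sim\nu}[J(\pi,\mu(\nu))]$ from both sides and applying the restricted-game $\epsilon$-MFCCE guarantee at index $j$ gives $\mathbb{E}_{\nu\sim\rho_n,\pi\sim\nu}[J(\pi',\mu(\nu)) - J(\pi,\mu(\nu))] \le \epsilon$ for all $\pi'\in\Pi$, which is exactly the definition of an $\epsilon$-MFCCE of the true game.

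For the CE case the argument is the analogous refinement: on termination, for every recommended policy $\pi_k$ with $\rho_n(\pi_k)>0$ we have $BR_{CE}(\pi_k,\rho_n) = \argmax_{\pi^*}\sum_\nu \rho_n(\nu\mid\pi_k) J(\pi^*,\mu(\nu)) \in \Pi_n$, so the conditional best response to each recommendation is already represented; combining this per-$\pi_k$ with the restricted-game $\epsilon$-MFCE guarantee delivered by the no-internal-regret algorithm, and multiplying through by $\rho_n(\pi_k)$, yields the $\epsilon$-MFCE inequality of the true game for every pair $\pi_k,\pi'$. Pairs $\pi_k$ with $\rho_n(\pi_k)=0$ are vacuous since the left-hand side of the MFCE inequality carries a factor $\rho(\pi_k)$.

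I expect the main obstacle to be bookkeeping rather than anything deep: one must be careful that the ``no-regret implies $\epsilon$-restricted-equilibrium'' guarantee is stated for the restricted game on the \emph{current} $\Pi_n$, while the best-response oracles $BR_{CCE}, BR_{CE}$ range over the \emph{full} policy space $\Pi$ — so the termination condition ($BR \in \Pi_n$) is precisely what lets one promote the restricted-game guarantee to a true-game guarantee by replacing ``$\max$ over $\Pi_n$'' with ``$\max$ over $\Pi$''. A secondary subtlety is matching the two notions of $\epsilon$ (average-regret threshold versus equilibrium approximation); with external regret bounded by $\epsilon T$ the time-averaged deviation benefit is at most $\epsilon$, so the threshold translates directly, and likewise internal regret for the CE case. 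No further quantitative estimates are needed.
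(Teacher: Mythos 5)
Your proposal is correct and follows essentially the same route as the paper: termination from finiteness of the deterministic policy set, then using the termination condition ($BR \in \Pi_n$) to promote the restricted-game $\epsilon$-equilibrium guarantee (supplied by the no-regret subroutine) to the full policy space, for both the CCE and CE cases. The bookkeeping points you flag (restricted vs.\ full $\max$, and the vacuousness of recommendations with $\rho(\pi_k)=0$) are exactly the ones the paper's proof relies on, if stated there more tersely.
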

\begin{proof}
    Based on previous discussions, we know that PSRO must necessarily terminate. 
    
    If PSRO terminates when using a restricted MFCCE, we must have  $$\pi^* = \argmax\limits_\pi \sum_\nu \rho(\nu) J(\pi, \mu(\nu)) \in \Pi_n\;.$$ 
    By definition of $\rho$, $\sum_\nu \rho(\nu) \Big(J(\pi^*, \mu(\nu)) - J(\pi(\nu), \mu(\nu)) \Big) \leq \epsilon$, and therefore $\forall \pi \in \Pi$, $\sum_\nu \rho(\nu) \Big(J(\pi, \mu(\nu)) - J(\pi(\nu), \mu(\nu)) \Big) \leq \epsilon$, ergo: $\rho$ is a mean-field $\epsilon$-coarse correlated equilibrium.

    The proof for mean-field correlated equilibria follows a similar line of arguments and is detailed in Appendix \ref{appendix:psro_ce_convergence_proof}.    
\end{proof}

As we will see in the next section, it is often possible to improve upon the uniform mixture of $(\nu_t)_{t=1}^T$ output by the regret-minimisation algorithm to obtain a more accurate approximation to an MF(C)CE.

\subsection{Improving the Bandit: Speed}\label{subsection:improving_bandit_speed}

\begin{figure}%
    \centering
    \includegraphics[scale=0.35]{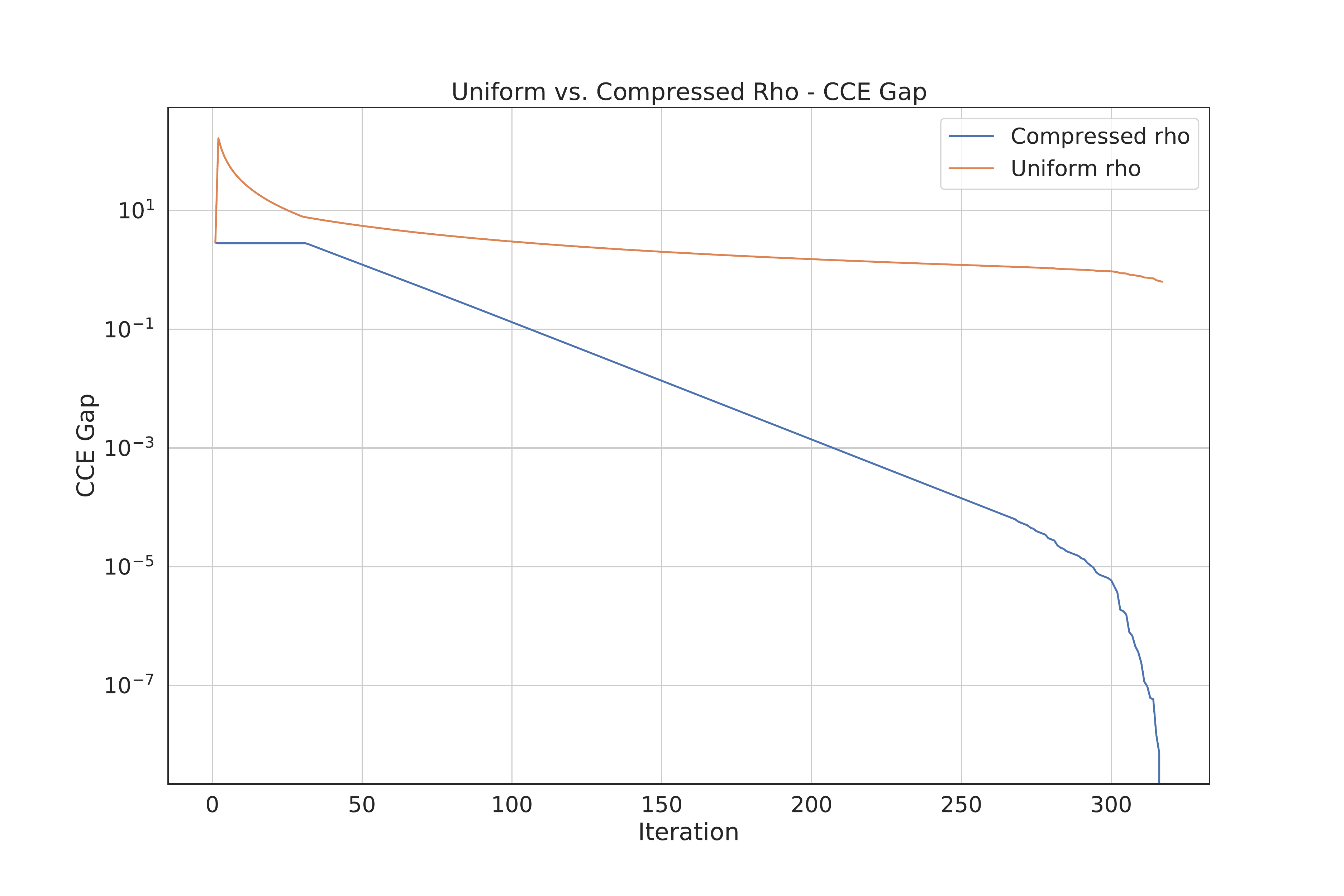}
    \caption{Uniform vs. Compressed $\rho$ - CCE Gap / Time}
    \label{fig:average_vs_compressed}
\end{figure}

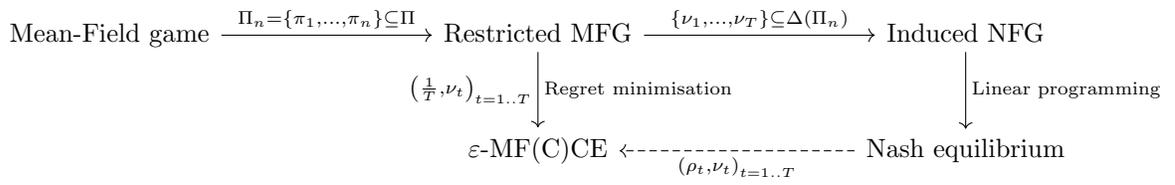
\begin{figure*}\hspace{-1.5cm}
    \begin{tikzcd}[row sep=large, column sep=8em]
        \text{Mean-Field game} \arrow{r}{\Pi_n = \{\pi_1,\ldots,\pi_n\}\subseteq \Pi} & \text{Restricted MFG} \arrow{r}{\{\nu_1,\ldots,\nu_T \} \subseteq \Delta(\Pi_n)} \arrow{d}{\text{Regret minimisation}}[swap]{\left(\frac{1}{T}, \nu_t\right)_{t=1..T}} & \text{Induced NFG} \arrow{d}{\text{Linear programming}} \\
        &  \text{$\varepsilon$-MF(C)CE}  & \arrow[dashed]{l}{\left(\rho_t, \nu_t\right)_{t=1..T}} \text{Nash equilibrium}
    \end{tikzcd}
    \caption{Reductions involved in approximation equilibrium computation in MF-PSRO.} 
    \label{fig:mf-psro-c-ce-overview}
\end{figure*}

\subsubsection{The No-Regret Speedup Algorithm: Bandit Compression}

One could use No-regret learners directly to converge towards MF(C)CE, but their equilibrium contains $T$ different distributions. This potentially means a very high amount of different $\nu_t$ recommended by our (C)CE, which can lead to learning difficulties on the part of best-responders (since every separate $\nu$ must be taken into account), implementation difficulties of equilibria in the real world, and inefficiencies: Indeed, changing per-timestep weights $\frac{1}{T}$ to potentially non-uniform $\rho_t$ can lead to converging to $\epsilon'$-MF(C)CE instead of $\epsilon$ ones, with $\epsilon' \ll \epsilon$, which is illustrated in Figure \ref{fig:average_vs_compressed}, computed at the first iteration of PSRO, in the Crowd Modelling~\cite{perrin2020fictitious} game. We define $(\rho_t)_t$ as the optimal solution of the following optimization problem:
\begin{align}
    \min\limits_{\rho} \; \max_i \; \rho^t \text{Regret}_i \label{eq:bandit_cce_optim_problem}\\
    \text{s.t.} \;\; \forall t \;\; \rho_t \geq 0, \;\; \sum_t \rho_t = 1 \nonumber
\end{align} with $\text{Regret}_i[t] := J(\pi_i, \mu(\nu_t)) - J(\pi(\nu_t), \mu(\nu_t))$.

We note that Problem (\ref{eq:bandit_cce_optim_problem}) can be interpreted as finding the row player's Nash equilibrium distribution in a zero-sum normal-form game whose payoff matrix for player 1 is Regret. We note that this objective can be expressed linearly.%

A similar problem can be solved to find better restricted mean-field correlated equilibria. First, define 
$$\text{Regret}_{i, j}(t) = \nu_t(i) \Big( J(\pi_j, \mu(\nu_t)) - J(\pi_i, \mu(\nu_t)) \Big) $$ The following problem gives optimal temporal weights $\rho$ for restricted mean-field correlated equilibria
\begin{align}
    \min\limits_{\rho} \; \max_{i, j} \; \rho^t \text{Regret}_{i, j} \label{eq:bandit_ce_optim_problem}\\
    \text{s.t.} \;\; \forall t \;\; \rho_t \geq 0, \;\; \sum_t \rho_t = 1. \nonumber
\end{align}

This problem can similarly be expressed linearly. The following theorem confirms the optimality of $\rho$, the solution of Problem (\ref{eq:bandit_cce_optim_problem}) or Problem (\ref{eq:bandit_ce_optim_problem}):

\begin{theorem}[Optimality of $\rho$]
    If $\rho = \frac{1}{T} \sum_{t=1}^T \delta_{\nu_t}$ is a restricted $\epsilon$-MFCCE (respectively $\epsilon$-MFCE), then $(\rho^*_t, \nu_t)_t$, with $\rho^*$ the optimal solution of Problem \ref{eq:bandit_cce_optim_problem} (respectively \ref{eq:bandit_ce_optim_problem}), yields a restricted $\epsilon'$-MF(C)CE of the restricted game, with $\epsilon' \leq \epsilon$; and no other $\rho$ distribution over $(\nu_t)_t$ can yield an $\epsilon''$-MF(C)CE with $\epsilon'' < \epsilon'$.
\end{theorem}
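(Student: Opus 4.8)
The plan is to recognize that the quantity controlling the approximation quality of a correlation device supported on $(\nu_t)_{t=1}^T$ with weights $\rho$ is exactly the maximum over deviations of the weighted regret, i.e. $\max_i \rho^\top \text{Regret}_i$ in the CCE case (and $\max_{i,j}\rho^\top \text{Regret}_{i,j}$ in the CE case). So I would begin by writing out, from the definition of an $\epsilon$-MFCCE applied to the device $(\rho_t,\nu_t)_t$, the identity that the exploitability of deviating to $\pi_i$ equals $\sum_t \rho_t\bigl(J(\pi_i,\mu(\nu_t)) - J(\pi(\nu_t),\mu(\nu_t))\bigr) = \rho^\top \text{Regret}_i$; taking the max over $i$ gives that $(\rho_t,\nu_t)_t$ is an $\epsilon'$-MF(C)CE with $\epsilon' = \max_i \rho^\top \text{Regret}_i$, and this is precisely the objective minimized in Problem~\eqref{eq:bandit_cce_optim_problem}. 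The CE case is identical with $\text{Regret}_{i,j}(t) = \nu_t(i)\bigl(J(\pi_j,\mu(\nu_t)) - J(\pi_i,\mu(\nu_t))\bigr)$ and Problem~\eqref{eq:bandit_ce_optim_problem}.

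Given that identification, the two claims of the theorem are immediate. For $\epsilon'\le\epsilon$: the uniform weights $\rho_t = 1/T$ are feasible for Problem~\eqref{eq:bandit_cce_optim_problem}, and by hypothesis $\rho = \frac1T\sum_t \delta_{\nu_t}$ is an $\epsilon$-MFCCE, so the uniform weights achieve objective value $\max_i \frac1T\sum_t \text{Regret}_i[t] \le \epsilon$; since $\rho^*$ is a minimizer, $\epsilon' = \max_i (\rho^*)^\top \text{Regret}_i \le \epsilon$. For the second claim — that no weighting $\rho$ over $(\nu_t)_t$ yields an $\epsilon''$-MF(C)CE with $\epsilon'' < \epsilon'$ — I would again use that any such $\rho$ is feasible for Problem~\eqref{eq:bandit_cce_optim_problem}, and the exploitability of the resulting device is exactly $\max_i \rho^\top \text{Regret}_i \ge \epsilon'$ by optimality of $\rho^*$; hence that device cannot be an $\epsilon''$-MF(C)CE for any $\epsilon'' < \epsilon'$. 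The CE case runs verbatim with Problem~\eqref{eq:bandit_ce_optim_problem} and the doubly-indexed regret.

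The one genuinely substantive point — and the step I'd treat most carefully — is the very first one: verifying that the deviation benefit under a nonuniform device $(\rho_t,\nu_t)_t$ is *linear* in $\rho$ and equals $\rho^\top \text{Regret}_i$ (resp.\ $\rho^\top \text{Regret}_{i,j}$). This is where the earlier observation that $\mathbb{E}_{\pi\sim\nu}[J(\pi,\mu)]$ is linear in $\nu$ for fixed $\mu$, and more to the point that fixing the atom $\nu_t$ fixes both the population measure $\mu(\nu_t)$ and the recommended-policy payoff, is used: each $\nu_t$ contributes a term that does not itself depend on $\rho$, so reweighting the atoms only reweights a fixed vector of per-atom regrets. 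Once this linear structure is in hand, everything reduces to the elementary fact that an LP minimizer over the simplex attains a value no larger than any feasible point (giving $\epsilon'\le\epsilon$) and no larger than the value it itself attains, which bounds every competitor from below (giving the optimality statement). I should also note in passing the game-theoretic reading already mentioned in the text — Problem~\eqref{eq:bandit_cce_optim_problem} is computing a minimax/Nash value of the zero-sum game with payoff matrix $\text{Regret}$ — which is what guarantees the minimum is attained and well-defined, but this is not needed for correctness, only for the interpretation.
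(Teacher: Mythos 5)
Your proposal is correct and follows essentially the same route as the paper's proof: identify the deviation incentive of the device $(\rho_t,\nu_t)_t$ with the linear objective $\max_i \rho^\top \text{Regret}_i$ (resp.\ $\max_{i,j}\rho^\top\text{Regret}_{i,j}$), use feasibility of the uniform weights to get $\epsilon'\le\epsilon$, and use optimality of $\rho^*$ over the simplex to rule out any better weighting. Your explicit justification of why the per-atom regrets are fixed vectors independent of $\rho$ is a welcome elaboration of a step the paper leaves implicit.
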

\begin{proof}
    For restricted MFCCEs, the deviation incentive against the correlation device sampling $\nu_t$ with probability $\rho_t$ in the restricted game is $$\mathbb{E}_{\nu \sim \rho, \pi \sim \nu} \left[ J(\pi', \mu(\nu)) - J(\pi, \mu(\nu)) \right] = \max\limits_i \rho^t \text{Regret}_i\,.$$
    Since the uniform distribution is a possible value for $\rho$, we necessarily have $\max\limits_i \rho^t \text{Regret}_i \leq \max\limits_i \frac{1}{T} \sum_t \text{Regret}_i[t] = \epsilon$, which concludes that part of the proof. The proof for restricted MFCEs follows the same line of arguments, and is detailed in Appendix \ref{subappendix:rho_star_optimality_ce}.
    
    Optimality of the solutions of problems (\ref{eq:bandit_cce_optim_problem}) and (\ref{eq:bandit_ce_optim_problem}) directly follows from their definitions together with the above derivations. 
\end{proof}
Given the empirical tendency of this approach to compress temporal distribution, we name it \textbf{bandit compression}. Empirically, it allows us to find much more accurate (Figure \ref{fig:average_vs_compressed}) and sparser (Appendix \ref{appendix:bandit_compression_sparsity}) distributions than uniformly averaging over $\big(\nu_t \big)_t$, and in a much lower number of steps. Yet, this algorithm is only exact in the case where the regret used by the algorithm is noiseless. The next question is therefore, how sensitive is bandit compression to noise in the regret matrix? 

\subsubsection{On the value-continuity of min-max problems}

We provide bounds on computed Average Regrets differences when $J$ is perturbed by an additive random variable $\epsilon$: $ \tilde J(\pi, \mu) =  J(\pi, \mu) + \epsilon$, giving rise to notation $\text{Regret}^\epsilon_i$, and to the identity, if we write $\tilde\epsilon_t = \epsilon_t - (\nu_t)^t \epsilon_t$, $\text{Regret}^\epsilon_i = \text{Regret}_i + \tilde\epsilon_i$.

We write \begin{align*}
    \text{Regret}_* = \min\limits_\rho \max\limits_i \rho^t \text{Regret}_i, \; \text{Regret}_*^\epsilon = \min\limits_\rho \max\limits_i \rho^t \text{Regret}^\epsilon_i
\end{align*}
We name $i_*$ and $\rho_*$ terms such that $\text{Regret}_* = (\rho_*)^t \text{Regret}_{i_*}$, and $i_*^\epsilon$ and $\rho_*^\epsilon$ the same values for $\text{Regret}_*^\epsilon$.

The quantity we wish to bound is how much additional regret we experience in expectation (ie. without noise) when using the noisy mixture weight $\rho_*^\epsilon$ instead of $\rho_*$, which we name $\Delta_O = \max\limits_i (\rho_*^\epsilon)^t \text{Regret}_{i} - (\rho_*)^t \text{Regret}_{i_*}$.

\begin{proposition}[Value-continuity of min-max] \label{proposition:value_continuity}
The optimality gap $\Delta_O$ is bounded in the following way:
$$ 0 \leq \Delta_O \leq (\rho_*)^t \tilde\epsilon_{i_*^\epsilon} - \min\limits_i(\rho_*^\epsilon)^t \tilde\epsilon_i \leq 2 ||\tilde\epsilon||_\infty \leq 4 || \epsilon ||_\infty \, .$$
\end{proposition}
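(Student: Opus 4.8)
The plan is to control $\Delta_O$ by comparing the three min-max values in play — $\max_i(\rho_*^\epsilon)^t\text{Regret}_i$, $\text{Regret}_* = (\rho_*)^t\text{Regret}_{i_*}$, and $\text{Regret}_*^\epsilon = (\rho_*^\epsilon)^t\text{Regret}^\epsilon_{i_*^\epsilon}$ — through the telescoping identity $\Delta_O = \big(\max_i(\rho_*^\epsilon)^t\text{Regret}_i - \text{Regret}_*^\epsilon\big) + \big(\text{Regret}_*^\epsilon - \text{Regret}_*\big)$, bounding each bracket separately. The lower bound $\Delta_O \ge 0$ is immediate: $\rho_*$ minimises $\rho \mapsto \max_i\rho^t\text{Regret}_i$, so $\max_i(\rho_*^\epsilon)^t\text{Regret}_i \ge \max_i(\rho_*)^t\text{Regret}_i = \text{Regret}_*$.

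For the first bracket I would use the elementary fact $\max_i a_i - \max_i b_i \le \max_i(a_i - b_i)$ together with $\text{Regret}^\epsilon_i = \text{Regret}_i + \tilde\epsilon_i$: taking $a_i = (\rho_*^\epsilon)^t\text{Regret}_i$ and $b_i = (\rho_*^\epsilon)^t\text{Regret}^\epsilon_i$ gives $a_i - b_i = -(\rho_*^\epsilon)^t\tilde\epsilon_i$, hence $\max_i(\rho_*^\epsilon)^t\text{Regret}_i - \text{Regret}_*^\epsilon \le -\min_i(\rho_*^\epsilon)^t\tilde\epsilon_i$, which is the second summand of the claimed middle expression. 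For the second bracket, $\text{Regret}_*^\epsilon = (\rho_*^\epsilon)^t\text{Regret}^\epsilon_{i_*^\epsilon}$ by definition of $i_*^\epsilon$; granting the inequality $(\rho_*^\epsilon)^t\text{Regret}^\epsilon_{i_*^\epsilon} \le (\rho_*)^t\text{Regret}^\epsilon_{i_*^\epsilon}$, expanding $\text{Regret}^\epsilon_{i_*^\epsilon} = \text{Regret}_{i_*^\epsilon} + \tilde\epsilon_{i_*^\epsilon}$ and bounding $(\rho_*)^t\text{Regret}_{i_*^\epsilon} \le \max_i(\rho_*)^t\text{Regret}_i = \text{Regret}_*$ yields $\text{Regret}_*^\epsilon - \text{Regret}_* \le (\rho_*)^t\tilde\epsilon_{i_*^\epsilon}$, the first summand. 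Adding the two bracket bounds gives exactly $\Delta_O \le (\rho_*)^t\tilde\epsilon_{i_*^\epsilon} - \min_i(\rho_*^\epsilon)^t\tilde\epsilon_i$.

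The crux — the step I expect to be the main obstacle — is the inequality $(\rho_*^\epsilon)^t\text{Regret}^\epsilon_{i_*^\epsilon} \le (\rho_*)^t\text{Regret}^\epsilon_{i_*^\epsilon}$, i.e.\ that the noiseless-optimal weights never fare worse than $\rho_*^\epsilon$ against the single deviation $i_*^\epsilon$. This is a one-sided saddle inequality: viewing $\text{Regret}_*^\epsilon = \min_\rho\max_i\rho^t\text{Regret}^\epsilon_i$ as the value of a bilinear zero-sum game (via the minimax theorem), one wants $(\rho_*^\epsilon, i_*^\epsilon)$ to form a saddle point, which requires $i_*^\epsilon$ to be chosen not merely as a best response to $\rho_*^\epsilon$ but as a maximin-optimal deviation; I would build this into the definition of $i_*^\epsilon$, observing that if no pure maximin deviation exists one simply replaces $(\rho_*)^t\tilde\epsilon_{i_*^\epsilon}$ by the fully general $\max_i(\rho_*)^t\tilde\epsilon_i$, which changes nothing downstream. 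Finally, the last two inequalities are routine coarsenings: each $(\rho)^t\tilde\epsilon_i$ is a convex combination of entries of $\tilde\epsilon_i$, hence bounded in absolute value by $\|\tilde\epsilon\|_\infty$, giving $\le 2\|\tilde\epsilon\|_\infty$; and since $\tilde\epsilon_t = \epsilon_t - (\nu_t)^t\epsilon_t$ is the difference of $\epsilon_t$ and a convex average of its own entries, the triangle inequality gives $\|\tilde\epsilon\|_\infty \le 2\|\epsilon\|_\infty$, hence $\le 4\|\epsilon\|_\infty$.
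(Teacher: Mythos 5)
Your proof is correct and follows essentially the same route as the paper's: the same telescoping through $\text{Regret}_*^\epsilon$, the same $\max_i(a_i-b_i)\le \max_i a_i - \min_i b_i$ step for the first bracket, and the same coarsenings at the end. The one place you go beyond the paper is the ``crux'' you flag: the paper silently uses $(\rho_*^\epsilon-\rho_*)^t\text{Regret}^\epsilon_{i_*^\epsilon}\le 0$, which, as you observe, is a saddle inequality that holds only if $i_*^\epsilon$ is a maximin-optimal pure deviation rather than merely a best response to $\rho_*^\epsilon$; your fallback of replacing $(\rho_*)^t\tilde\epsilon_{i_*^\epsilon}$ by $\max_i(\rho_*)^t\tilde\epsilon_i$ (obtained by bounding $\text{Regret}_*^\epsilon\le\max_i(\rho_*)^t\text{Regret}^\epsilon_i$ directly) repairs this without weakening the final $2\|\tilde\epsilon\|_\infty$ bound, and is the cleaner statement of the intermediate inequality.
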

\begin{proof}
    By optimality of $\rho_*$, we already have that $\Delta_O \geq 0$.
    \begin{align*}\hspace{-0.9cm}
        \Delta_O &= \max\limits_i (\rho_*^\epsilon)^t \text{Regret}_{i} - (\rho_*)^t \text{Regret}_{i_*} \\
                 &= \max\limits_i (\rho_*^\epsilon)^t (\text{Regret}_{i} + \tilde\epsilon_i) - (\rho_*^\epsilon)^t \tilde\epsilon_i - (\rho_*)^t \text{Regret}_{i_*} \\
                 &\leq (\rho_*^\epsilon)^t (\text{Regret}_{i_*^\epsilon} + \tilde\epsilon_{i_*^\epsilon}) - \min\limits_i(\rho_*^\epsilon)^t \tilde\epsilon_i - (\rho_*)^t (\text{Regret}_{i_*^\epsilon} + \tilde\epsilon_{i_*^\epsilon}) + (\rho_*)^t \tilde\epsilon_{i_*^\epsilon} \\
                 &\leq (\rho_*^\epsilon - \rho_*)^t (\text{Regret}_{i_*^\epsilon} + \tilde\epsilon_{i_*^\epsilon}) + (\rho_*)^t \tilde\epsilon_{i_*^\epsilon} - \min\limits_i(\rho_*^\epsilon)^t \tilde\epsilon_i \\
                 &\leq (\rho_*)^t \tilde\epsilon_{i_*^\epsilon} - \min\limits_i(\rho_*^\epsilon)^t \tilde\epsilon_i \leq 2 ||\tilde\epsilon||_\infty
    \end{align*}
    $\forall t, \; \tilde\epsilon_t = \epsilon_t - (\nu_t)^t \epsilon_t$, and $\epsilon_t \leq ||\epsilon||_\infty$ and $- (\nu_t)^t \epsilon_t \leq ||\epsilon||_\infty$, therefore $||\tilde\epsilon||_\infty \leq 2 ||\epsilon||_\infty$, which concludes the proof.
\end{proof}

The tightness of this bound can be verified via noting that if $\rho_* = \rho_*^\epsilon$ and the minimum of $(\rho_*)^t \epsilon_i$ is reached for $i = i_*^\epsilon$, then the optimality gap is null.

We discuss this bound in more details in Appendix \ref{appendix:continuity_minimax_examples}, where we compute its value on several examples. 

\subsubsection{The improved PSRO algorithm}

We add bandit compression onto Algorithm \ref{alg:regret-protocol}, accompanied with a few optimization criteria, yielding Algorithm \ref{alg:improved_mf_psro_ce_gen}. The improvements and their motivations are discussed in Appendix \ref{subappendix:speed_improvement_details}.

\begin{remark}[Use of the Algorithm for Nash-Convergence]\label{remark:adapted-nash-convergence}
    We note that one can also use Algorithm \ref{alg:improved_mf_psro_ce_gen} for convergence towards MFNE if one uses an iterative solver for computing the Nash equilibrium - in that case, $\mathbb{A}$ is the Nash solver, and $\text{Regret}_*$ is the exploitability. Since a Nash equilibrium only uses a single distribution, one can either bypass solving Problem \ref{eq:bandit_cce_optim_problem}, or solve it trivially with $\rho(\nu_*) = 1$.
\end{remark}

\begin{algorithm}%
\SetAlgoLined
\KwResult{Policy set  $\Pi^* = \{\pi_1, ..., \pi_n$\}, $\epsilon$-MF(C)CE $\rho^*$}
 $\Pi_0 = \emptyset$, $\Pi_1 = \{ \pi_1 \}$ with $\pi_1$ any policy, $\rho(\delta_{\pi_1}) = 1.0$, $N = 1$\;
 \While{$(\Pi_{n+1} \setminus \Pi_n) \neq \emptyset$ or $\rho_{tol} > \rho_{lim}$}{
  
  $\Pi_{n+1} = \Pi_{n} \cup \{ BR_{(C)CE}(\pi_i, \rho_T) \; | \; \pi_i, \; \rho(\pi_i) > 0 \}$ \;
  \If{$\Pi_{n+1} == \Pi_{n}$}{
    $\rho_{tol} = \frac{\rho_{tol}}{2}$
  }
  
  $n = n + 1$\;
  
  Initialize $\mathbb{A}(\Pi_{n})$\;
  
  Step Count = 0\;
  \While{$\text{Regret}_*$ > $\rho_{tol}$}{
    Step Count += 1 \;
    Do one step of $\mathbb{A}(\Pi_{n+1})$ \;
    \If{Step Count $\equiv 0 [\tau_{Compress}]$}{
        Compute $\rho_*$ optimal solution of Problem \ref{eq:bandit_cce_optim_problem} (CCE) / \ref{eq:bandit_ce_optim_problem} (CE) \;
        Compute $\rho_*$'s associated regret $\text{Regret}_*$ \;
    }
  }
  $\rho_{n+1} = \rho_*$
 }
 \caption{Sped-up mean-field PSRO((C)CE)} \label{alg:improved_mf_psro_ce_gen}
\end{algorithm}

\subsection{Complexity discussion}

The use of traditional solvers, as has been the case in PSRO so far, requires filling a payoff table. At a given iteration $n$, this means estimating $n$ match results for the newly added Best Response (The other match results being stored). 

\begin{property}[Payoff matrix estimation complexity]
    When match payoff estimation is done via sampling match outcomes, the number of matches $T$ necessary to reach within-$\epsilon$ estimation precision with probability $\alpha$ is $T = O(\frac{n}{\alpha \epsilon^2})$.
\end{property}
\begin{proof}
    If we have $T$ episodes to gather on $2n+1$ matches, the most natural (though not necessarily most efficient) way to distribute our compute budget is to give each match $\frac{T}{2n+1}$ episodes.
    
    The variance of an estimated match score $\hat J$ is therefore $\text{Var}(\hat J) = \frac{\text{Var}(J)}{\frac{T}{2n+1}} = (2n+1) \frac{\text{Var}(J)}{T}$ where $\text{Var}(J)$ is the variance of the random variable representing match outcomes for $J$.
    
    Using Chebyshev's inequality, we have $\mathbb{P}(|\hat J - J | \geq \epsilon) \leq \frac{\text{Var}(\hat J)}{\epsilon^2} = (2n+1) \frac{\text{Var}(J)}{T \epsilon^2}$. If we aim to be within $\epsilon$-precision of $J$ with probability $\alpha$, i.e. $\mathbb{P}(|\hat J - J | \geq \epsilon) = \alpha$, we need $T = O(\frac{n}{\alpha \epsilon^2})$. 
\end{proof}

We contrast this with the complexity of using no-external- and internal-regret learners, given that one chooses an efficient algorithm:

\begin{property}[Bandit $\epsilon$-Regret Complexity]
    The number of game matches $T$ necessary to reach within-$\epsilon$ average regret is $T = O\left(\frac{n^{3} \; log(n)}{\epsilon^2}\right)$ for no-internal-regret learners, and $T = O\left(\frac{n \; log(n)}{\epsilon^2}\right)$ for no-external-regret learners. In the case of additively noisy evaluation, where samples are evaluated $M$ times and averaged, these complexities become $T = O\left(\frac{M \; n^{3} \; log(n)}{\epsilon^2}\right)$ for internal-regret, and $T = O\left(\frac{n \; M \; log(n)}{\epsilon^2}\right)$ for external regret; both with probability $\delta \geq 1 - n \frac{4 \sigma^2}{T M \epsilon^2}$, where $\sigma^2$ is the noise variance.
\end{property}
\begin{proof}
    The Hedge Algorithm~\cite{blumregret} adapted for the partial-information setting~\cite{BlumInternalExternalRegret} has average regret bound $\epsilon = O\left(\sqrt{\frac{n \; log(n)}{T}}\right)$, therefore $T = O\left(\frac{n \; log(n)}{\epsilon^2}\right)$ when returns are exact.

    Optimal Swap-regret minimizers can be derived from optimal external-regret minimizers by running N instances of them in parallel, as shown in~\cite{blumregret}, therefore $\epsilon = O\left(n \sqrt{\frac{n \; log(n)}{T}}\right)$ and $T = O\left(\frac{n^{3} \; log(n)}{\epsilon^2}\right)$.
    
    The additive-payoff noise case is discussed in Appendix \ref{subappendix:proof_noisy_payoff}. The proof relies on decomposing observed regret in two terms - true-regret and noise, then applying Chebyshev's concentration inequality and bounding noise terms with $||\epsilon||_\infty$.
\end{proof}

We provide a commentary of these results in Appendix \ref{subappendix:complexity_comments}, notably comparing the traditional PSRO approach's complexity with the bandit-led approach, analysing noise sensitivity, extending them to the $N$-player case, and examining the fully-observable setting. 

\section{Experimental results} \label{section:experiments}

To demonstrate the viability of our approach, we use three different metrics presented in Section \ref{subsection:evaluation_metrics}, which we evaluate when running MF-PSRO on four different mean-field games, which are described in Section \ref{subsection:evaluation_games}. Evaluation methods are detailed in Section \ref{subsection:evaluation_methods}, and evaluation results are discussed in Section \ref{subsection:evaluation_results}.

\subsection{Evaluation metrics} \label{subsection:evaluation_metrics}

For a given correlation device $\rho$, we define $$\text{CCEGap}(\rho) := \max\limits_\pi \sum_{\nu} \rho(\nu) \big( J(\pi, \mu(\nu)) - J(\pi(\nu), \mu(\nu) \big)$$ By construction, we directly have that $\text{CCEGap}(\rho) = 0$ is equivalent to $\rho$ being an MFCCE. In the same fashion, we define $$\text{CEGap}(\rho) := \max\limits_{\pi'} \max\limits_{\pi | \rho(\pi) > 0}  \sum_{\nu} \rho(\nu | \pi) \big( J(\pi', \mu(\nu)) - J(\pi(\nu), \mu(\nu) \big)$$
for MFCE characterisation. Finally, for a given population distribution $\nu\in\Delta(\Pi)$, we introduce $$\text{Exploitability}(\nu) := \max_\pi J(\pi, \mu(\nu)) - J(\pi(\nu), \mu(\nu))$$ so that $\text{CCEGap}(\rho) = 0$, which reaches $0$ if and only if $\nu$ is an MFNE.

\subsection{Evaluation games} \label{subsection:evaluation_games}

The four games we use to evaluate convergence include atwo complex games available in OpenSpiel~\cite{lanctot2020openspiel}, Predator-Prey~\cite{perolat2021scaling} and Crowd Modeling~\cite{perrin2020fictitious}, and two new small normal-form mean-field games, \emph{Coop / Betray / Punish} and \emph{mean-field biased Rock-Paper-Scissors}, which are described in detail and motivated in Appendix \ref{subappendix:game_description_motivation}. Summarily, \emph{Coop / Betray / Punish} is a 3-action normal-form game where agents can choose to either Cooperate, and all get a good reward; betray and take advantage of others; or punish the betrayers. But punishing agents also take some reward away from cooperators (they must support the punishers). Payoffs are non-linear (quadratic) in distributions. \emph{mean-field biased Rock-Paper-Scissors} is a classic biased Rock-Paper-Scissors game, where one gets as reward for playing rock the proportion of players playing scissors minus that playing paper, all distributions multiplied by different coefficients.

\subsection{Evaluation Methods} \label{subsection:evaluation_methods}

The regret minimizer used by mean-field PSRO((C)CE) is Regret Matching~\cite{tammelin2014solving}, and the Black-Box Optimization method used by mean-field PSRO(Nash) is CMA-ES~\cite{hansen2016cma}. As per Remark \ref{remark:adapted-nash-convergence}, we use Algorithm \ref{alg:improved_mf_psro_ce_gen} for both mean-field PSRO((C)CE) and mean-field PSRO(Nash), since the Nash solver CMA-ES is iterative.

Regarding convergence to MF(C)CE, since there exists, to the best of our knowledge, no other algorithm known to converge towards these weaker equilibria  we investigate the convergence behavior of mean-field PSRO((C)CE) with additional payoff noise.

Regarding convergence towards MFNE, we compare mean-field PSRO to OMD with several different learning rates, and Fictitious Play, both algorithms available on OpenSpiel.

\subsection{Evaluation Results} \label{subsection:evaluation_results}

\begin{figure}
    \centering
    \includegraphics[scale=0.4]{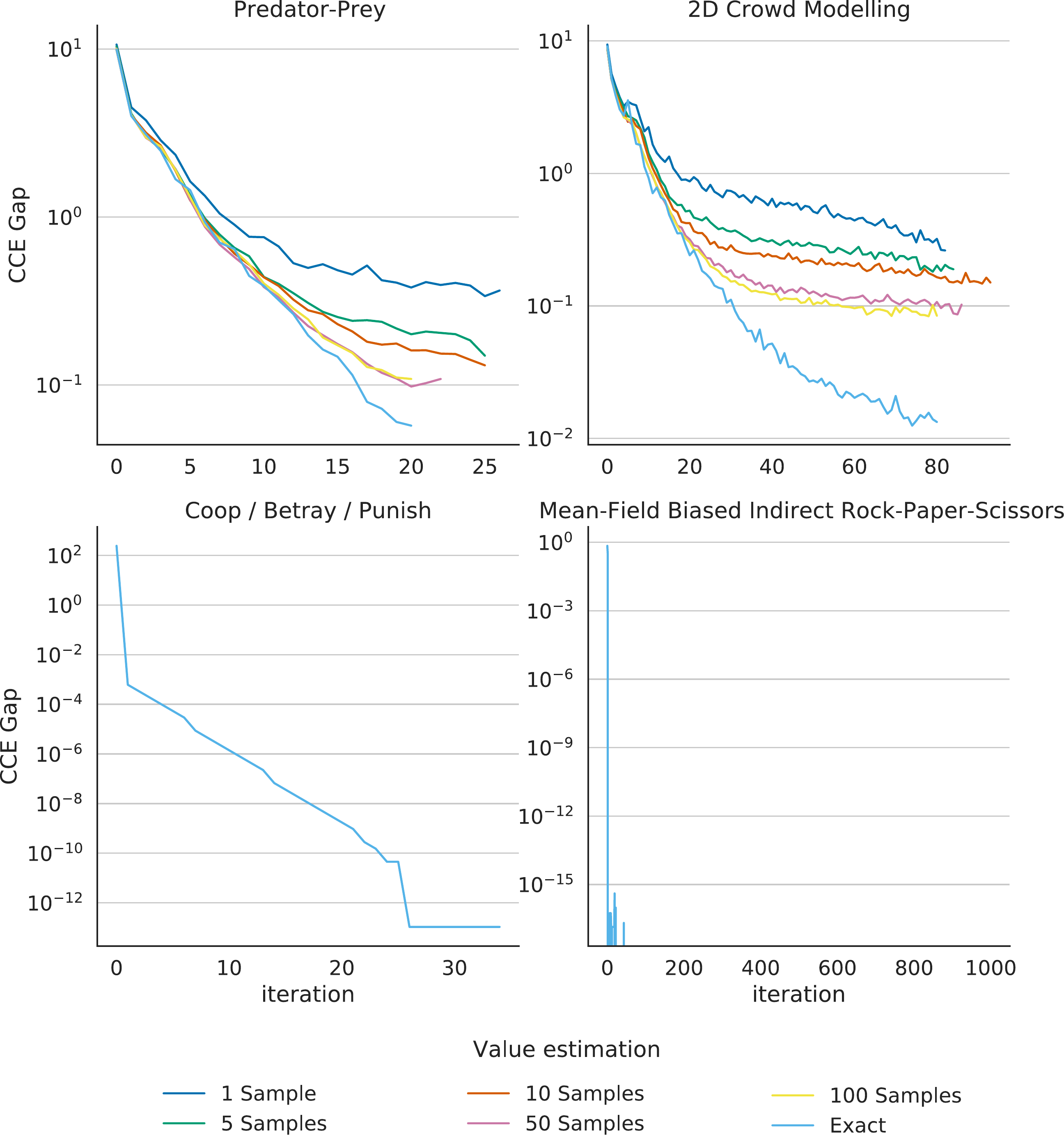}
    \caption{CCE Gap of mean-field PSRO(CCE).}
    \label{fig:cce_gap}
\end{figure}

\begin{figure}
    \centering
    \includegraphics[scale=0.4]{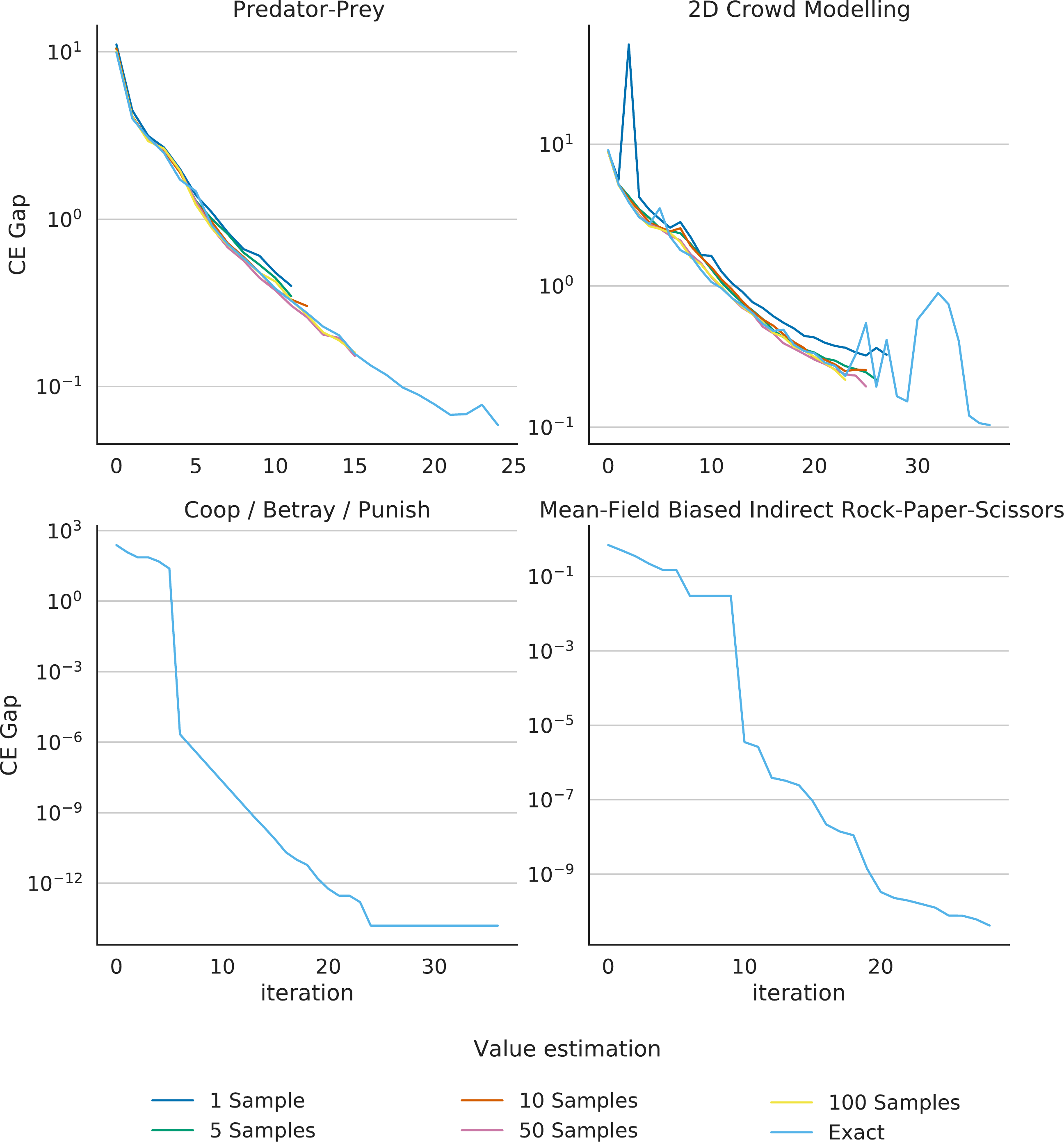}
    \caption{CE Gap of mean-field PSRO(CE).}
    \label{fig:ce_gap}
\end{figure}

\begin{figure}
    \centering
    \includegraphics[scale=0.4]{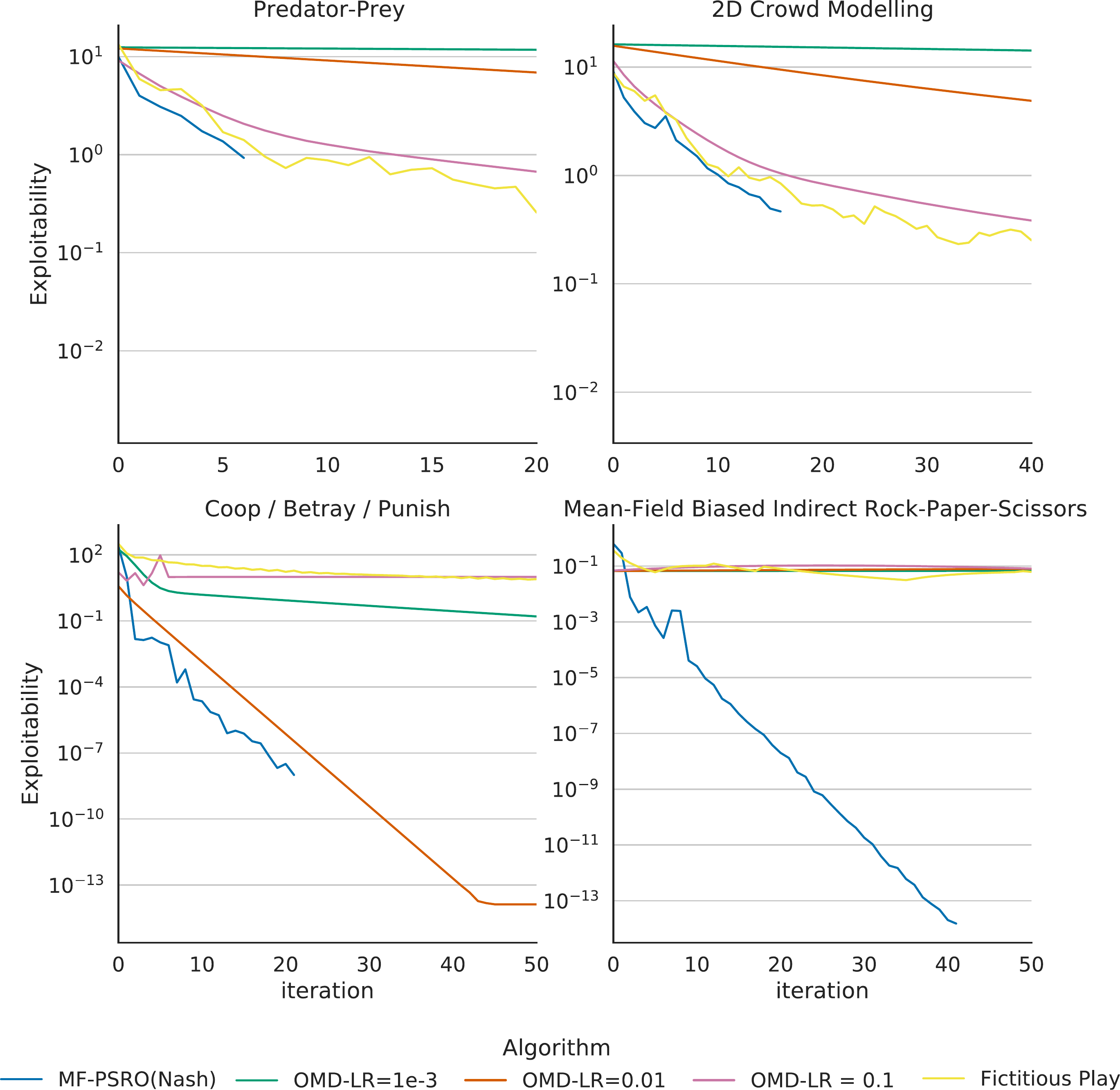}
    \caption{Exploitability of mean-field PSRO(Nash).}
    \label{fig:exploitability}
\end{figure}

Figure \ref{fig:cce_gap} presents the CCE-Gap of mean-field PSRO(CCE),  \ref{fig:ce_gap}, the CE-Gap of mean-field PSRO(CE), while Figure \ref{fig:exploitability} exposes the Exploitability of mean-field PSRO(Nash) on the four mean-field game environments described above.
We note that in both normal-form games, mean-field PSRO converges within numerical precision towards mean-field correlated, coarse correlated and Nash equilibria after only a few iterations. 

Nash-wise, OMD seems capable to follow PSRO at a similar speed on \emph{Coop / Betray / Punish}, but fails utterly to converge on \emph{mean-field biased Rock-Paper-Scissors}. We note that OMD's convergence is strongly affected by its learning rate. Fictitious play does not manage to find good equilibria in these games. 

On more complex games, mean-field PSRO quickly converges towards very good correlated ($\text{CCE Gap} \approx 10^{-1}$), coarse correlated equilibria ($\text{CE Gap} \approx 10^{-1}$), and mean-field PSRO(Nash) seems to quickly minimize exploitability - but it does much more slowly (time-wise) than both OMD and FP. This hints at a strong potential direction of improvement for mean-field PSRO. We note that in this zoomed-in plot, FP seems to outperform OMD. We provide a zoomed-out version in Appendix \ref{subappendix:zoomed_out_mf_psro_nash} where we see that OMD, with the correct learning rate, outperforms Fictitious Play as expected.

\section{Limitations}
Despite its modularity, several improvements on our approach are envisioned for further research. First, our approach cannot efficiently select higher-welfare (C)CEs over lower ones. This problem is known to be NP-Hard in general but learning approaches could hold the key to unlocking these possibilities (see the more detailed discussion in Appendix \ref{subappendix:welfare_complexity}). 
Second, mean-field PSRO(Nash) relies on a black-box algorithm, whose characteristics strongly impacts the speed and equilibrium accuracy of the algorithm. Finding a principled, general and fast Nash solver in complex restricted games, like we have for mean-field (C)CEs, could yield great improvements, both theoretically and performance-wise. 
Finally, our method is much slower than OMD or Fictitious Play on large games. This is largely due to a combination of slow payoff evaluation (be it sampled payoff or exact payoff) and relatively large amounts of steps needed to find a restricted equilibrium.

\section{Conclusion}
We have introduced a new mean-field Multi-Agent Reinforcement Learning algorithm, Mean-Field PSRO, and demonstrated its ability to converge to Nash, correlated and coarse correlated equilibria both theoretically and empirically in various benchmark games. Additionally, the approach was succesfully sped up using a new method named bandit compression, which is motivated by noise robustness and empirical speed.

The approach has only been tested so far using the computation of exact best-responses. We expect Reinforcement-Learning algorithms to work out of the box, and answering this question would unlock (C)CE convergence in very large and complex games.

\bibliographystyle{ACM-Reference-Format} 
\bibliography{references}

\appendix

\clearpage
\onecolumn

\section*{\centering Appendices for Learning Equilibria in Mean-Field Games: Introducting Mean-Field PSRO}

\vspace{1.5cm}

\section{Proof of graph closedness for restricted Nash existence}\label{appendix:proof_graph_closedness}

We wish to use the Kakutani Fixed Point theorem to prove the existence of restricted Nash equilibria. To do this, we have proved all the required hypothesis aside from graph-closedness of $\phi$, the best-response function. We prove here that $\text{Graph}(\phi)$ is closed.

\begin{proof}
    $\text{Graph}(\phi) = \{ (\nu, \nu') \in \Delta(\Pi_n) \times \Delta(\Pi_n) \; | \; \nu' \in \phi(\nu) \}$. Let $(\nu_k^1, \nu_k^2)_k$ be a sequence of elements of $\text{Graph}(\phi)$ which converges towards $(\nu_*^1, \nu_*^2) \in \Delta(\Pi_n) \times \Delta(\Pi_n)$.
    
    $r$ is continuous in $\mu$, therefore $J$ is also continuous in $\mu$. Since $J: (\nu_1, \nu_2) \rightarrow J(\pi(\nu_1), \mu(\nu_2))$ is linear in $\nu_1$ because $J\left(\pi(\nu_1), \mu(\nu_2)\right) = \sum_i \nu_1^i J\left(\pi_i, \mu(\nu_2)\right)$, it is also bicontinuous. 
    
    Since $J$ is bicontinuous, let $\epsilon > 0$ and $\alpha > 0$ be such that $\forall (\nu_1, \nu_2) \in \Delta(\Pi_n) \times \Delta(\Pi_n)$ such that $d\left((\nu_1, \nu_2), (\nu_*^1, \nu_*^2)\right) \leq \alpha$, 
    $$|J\left(\pi(\nu_1), \mu(\nu_2)\right) - J(\pi(\nu_*^1), \mu(\nu_*^2))| \leq \epsilon$$
    with $d$ a metric over $\Delta(\Pi_n) \times \Delta(\Pi_n)$ under which $J$ is continuous. Let $N_0 > 0$ be such that $\forall n \geq N_0, \; d\left((\nu_k^1, \nu_k^2), (\nu_*^1, \nu_*^2)\right) \leq \alpha$, and let $n \geq N_0$.
    
    By bicontinuity and triangle inequality, $$J\big(\pi(\nu), \mu(\nu_*^2)\big) \leq \epsilon + J\big(\pi(\nu), \mu(\nu_n^2)\big)$$ 
    $$- J\big(\pi(\nu_*^1), \mu(\nu_*^2)\big) \leq \epsilon - J\big(\pi(\nu_n^1), \mu(\nu_n^2)\big)$$ 
    And by optimality of $\nu_n^1$ against $\mu(\nu_n^2)$, $\forall \nu \in \Delta(\Pi_n)$, $$J\big(\pi(\nu), \mu(\nu_n^2)\big) - J\big(\pi(\nu_n^1), \mu(\nu_n^2)\big) \leq 0$$
    We then have, $\forall \nu \in \Delta(\Pi_n)$, 
    \begin{align*}\hspace{-0.2cm}
        J(\pi(\nu), \mu(\nu_*^2)) - J(\pi(\nu_*^1), \mu(\nu_*^2)) &\leq 2\epsilon + J(\pi(\nu), \mu(\nu_n^2)) - J(\pi(\nu_n^1), \mu(\nu_n^2)) \\
        &\leq 2 \epsilon        
    \end{align*}
    This is true for all $\nu$, so also for their sup:
    $$\sup_\nu J(\pi(\nu), \mu(\nu_*^2)) - J(\pi(\nu_*^1), \mu(\nu_*^2)) \leq 2\epsilon$$
    Finally, this is true for all $\epsilon > 0$. Taking $\epsilon$ to 0, we have that $J(\pi(\nu_*^1), \mu(\nu_*^2)) = \sup_\nu J(\pi(\nu), \mu(\nu_*^2))$, and thus $(\nu_*^1, \nu_*^2) \in \text{Graph}(\phi)$. Therefore $\text{Graph}(\phi)$ is closed.
\end{proof}

\section{Proof of convergence of mean-field PSRO towards mean-field correlated equilibria} \label{appendix:psro_ce_convergence_proof}

\begin{proof}
    If PSRO terminates when using a restricted mean-field correlated equilibrium, then it means that $\forall \pi_k, \; \rho(\pi_k) > 0, \; \pi^*(\pi_k) = \argmax\limits_\pi \sum_\nu \rho(\nu | \pi_k) J(\pi, \mu(\nu)) \in \Pi_n$. By definition of $\rho$, $\forall \pi \in \Pi_n, \; \rho(\pi) \sum_\nu \rho(\nu|\pi) \Big(J(\pi^*(\pi), \mu(\nu)) - J(\pi, \mu(\nu)) \Big) \leq \epsilon$, and therefore $\forall \pi' \in \Pi$, $\sum_\nu \rho(\nu) \Big(J(\pi', \mu(\nu)) - J(\pi, \mu(\nu)) \Big) \leq \epsilon$, ergo: $\rho$ is a mean-field $\epsilon$-correlated equilibrium.
\end{proof}

\section{The Linear special case} \label{appendix:linear_special_case}

\subsection{Definitions}

\begin{definition}[Diff-Affinity]
    We say that a function $f: x, z \rightarrow f(x, z)$ is diff-affine in z, or z-diff-affine, if $\forall x, y, \; \Delta_{x, y}(f): z \rightarrow f(x, z) - f(y, z)$ is affine.
\end{definition}

\begin{property} \label{property:diff_convexity_reward_condition}
    If $r$ is of the form $r(x, a, \mu) = C(\mu) + r_1(x, a)^t \mu + r_2(x, a)$, with $C$ any function of $\mu$, then J is diff-affine in $\mu$. Provided $r$ is $\mathcal{C}^2$, this property is also necessary.
\end{property}

We note that our following proofs' logic can also be applied with an approximate version of diff-affinity, where $$J(\pi', \mu(\nu)) - J(\pi, \mu(\nu)) \leq \sum_\pi \nu(\pi) \left( J(\pi', \mu^\pi) - J(\pi, \mu^\pi) \right) + \epsilon$$ this is for example the case when $r = f + g$, with $f$ a diff-affine function in $\mu$, and $\forall (x, a, \mu, \mu'), \; |g(x, a, \mu') - g(x, a, \mu)| \leq \epsilon$. In this case, mean-field PSRO converges to $\epsilon$ variants of our equilibria. 

\begin{remark}
    Requiring that $f: x, z \rightarrow f(x, z)$ to be such that $\forall x, y, \; \Delta_{x, y}(f): z \rightarrow f(x, z) - f(y, z)$ is convex (ie. is diff-convex) is equivalent to requiring that $f$ be diff-affine.
\end{remark}

\begin{proof}
    Let $f$ be diff-convex. Then we know that, since $f$ is scalar, $\Delta_{x, y}(f)$ is as well for all $x, y$. If $f$ is twice-differentiable in $z$, so is $\Delta_{x, y}(f)$ for all values of $x, y$. 
    The convexity condition on $\Delta_{x, y}(f)$ can be rewritten, if $f$ is twice-differentiable, as 
    
    \begin{align*}
        \forall x, y, &\frac{d^2 \Delta_{x, y}(f)}{dz^2} \geq 0 \\
        &\frac{d^2 f(x, z)}{dz^2} \geq \frac{d^2 f(y, z)}{dz^2} \\
    \end{align*}
    Inverting $x$ and $y$, we find that we have necessarily, $\forall x, y, \frac{d^2 f(x, z)}{dz^2} = \frac{d^2 f(y, z)}{dz^2} = c(z)$, therefore we know that $\forall x, z, f(x, z) = C(z) + a(x) z + b(x)$
\end{proof}

\subsection{Normal-form games equilibria and links to mean-field restricted games}

This section presents results linking restricted games with normal-form representations under the $\mu$-Diff-Affinity condition. We name $\Pi_n = \{ \pi_1, ..., \pi_n \}$ the set of policies used by the restricted game in the following.

\subsubsection{Nash equilibrium}

We wish to compute the restricted mean-field Nash equilibrium of given policies $\pi_1, ..., \pi_n$. To do this, we store values $(J(\pi_i, \mu^{\pi_j}))_{i, j}$ in a payoff matrix and compute the Nash equilibrium of the two-player game defined as follows : Player 1 receives the payoff received when player 1 chooses a deviating policy $i$ and player 2 chooses the population-generating policy $j$; Player 2 receives the transposed payoff (ie. Player 1 picks the population-generating policy $i$ and Player 2 picks the deviating policy $j$).

\[
\hspace{-0.4cm}
\begin{pmatrix}
J(\pi_1, \mu^{\pi_1}) & ... & J(\pi_1, \mu^{\pi_n})\\
J(\pi_2, \mu^{\pi_1}) & ... & J(\pi_2, \mu^{\pi_n})\\
... & ... & ...\\
J(\pi_{n-1}, \mu^{\pi_1}) & ... & J(\pi_{n-1}, \mu^{\pi_{n}})\\
J(\pi_n, \mu^{\pi_1}) & ... & J(\pi_n, \mu^{\pi_n})
\end{pmatrix}
,
\begin{pmatrix}
J(\pi_1, \mu^{\pi_1}) & ... & J(\pi_n, \mu^{\pi_1})\\
J(\pi_1, \mu^{\pi_2}) & ... & J(\pi_n, \mu^{\pi_2})\\
... & ... & ...\\
J(\pi_1, \mu^{\pi_{n-1}}) & ... & J(\pi_{n}, \mu^{\pi_{n-1}})\\
J(\pi_1, \mu^{\pi_n}) & ... & J(\pi_n, \mu^{\pi_n})
\end{pmatrix}
\]

\begin{theorem}[Normal-form and restricted game equivalence]\label{theorem:nf_rg_nash_eq}
    If $J$ is $\mu$-diff-affine, then any symmetric Nash equilibrium of the symmetric two-player game defined above is also a Nash-equilibrium of the restricted mean-field game defined by $\pi_1, ..., \pi_n$. 
\end{theorem}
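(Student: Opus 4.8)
The plan is to translate the equilibrium condition of the symmetric bimatrix game directly into the exploitability condition defining a restricted MFNE, with $\mu$-diff-affinity serving as the bridge between the nonlinear $\mu$-dependence of $J$ and the bilinear structure of the normal-form game. Write $A$ for the $n\times n$ matrix with entries $A_{ij}=J(\pi_i,\mu^{\pi_j})$, so player 1 receives $A$ and player 2 receives $A^\top$, and identify a distribution $\nu\in\Delta(\Pi_n)$ with the vector $(\nu(\pi_1),\dots,\nu(\pi_n))$. A symmetric Nash equilibrium is a pair $(\nu,\nu)$ with $e_i^\top A\nu \le \nu^\top A\nu$ for every $i=1,\dots,n$; the analogous inequality for player 2 holds automatically by symmetry and will not be needed. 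The goal is to show this same $\nu$ satisfies $J(\pi_i,\mu(\nu))-J(\pi(\nu),\mu(\nu))\le 0$ for all $i$, which is precisely the definition of a restricted MFNE.

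First I would record two identities. The first is the structural linearity of the representative-player payoff in the mixture weights when the population measure is fixed: $J(\pi(\nu),\mu)=\sum_j \nu(\pi_j)\,J(\pi_j,\mu)$ for any fixed $\mu$, which holds in mean-field games because a single player does not influence $\mu$ — this is exactly the linearity already invoked in the proof of Theorem~\ref{theorem:restricted_nash_existence}. The second, and the crucial one, uses $\mu$-diff-affinity: since $\mu(\nu)=\sum_k \nu(\pi_k)\,\mu^{\pi_k}$, the map $\mu\mapsto J(\pi_i,\mu)-J(\pi_j,\mu)$ is affine, and the coefficients $\nu(\pi_k)$ sum to one,
$$ J(\pi_i,\mu(\nu)) - J(\pi_j,\mu(\nu)) = \sum_k \nu(\pi_k)\bigl( J(\pi_i,\mu^{\pi_k}) - J(\pi_j,\mu^{\pi_k}) \bigr) = (A\nu)_i - (A\nu)_j . $$

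Combining these, I would compute the exploitability of $\pi_i$ directly:
$$ J(\pi_i,\mu(\nu)) - J(\pi(\nu),\mu(\nu)) = \sum_j \nu(\pi_j)\bigl( J(\pi_i,\mu(\nu)) - J(\pi_j,\mu(\nu)) \bigr) = \sum_j \nu(\pi_j)\bigl( (A\nu)_i - (A\nu)_j \bigr) = (A\nu)_i - \nu^\top A\nu . $$
The symmetric Nash condition makes the right-hand side $\le 0$ for every $i$, so $\nu$ is a restricted mean-field Nash equilibrium; linearity of $\pi'\mapsto J(\pi',\mu(\nu))$ over $\Delta(\Pi_n)$ then upgrades the inequality from the deterministic $\pi_i$ to all of $\Delta(\Pi_n)$, finishing the argument.

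The proof is short, and its only real content is recognizing that $\mu$-diff-affinity is exactly the property that collapses the nonlinear $\mu$-dependence of $J$ onto the bilinear structure $(A\nu)_i-(A\nu)_j$; without it the central identity fails and the equivalence breaks (indeed, Property~\ref{property:diff_convexity_reward_condition} shows this condition is essentially necessary under $\mathcal{C}^2$ regularity). The one point needing mild care is the reading of $J(\pi(\nu),\cdot)$ as the $\nu$-mixture of the $J(\pi_j,\cdot)$, which I would justify from the definition of $\pi(\nu)$ together with the fact that a representative player's deviation does not move $\mu$, just as in the restricted-Nash existence proof. Beyond this bookkeeping I do not anticipate any genuine obstacle.
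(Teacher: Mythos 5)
Your proof is correct and follows essentially the same route as the paper's: both arguments use $\mu$-diff-affinity to collapse the population mixture $\mu(\nu)=\sum_k\nu(\pi_k)\mu^{\pi_k}$ inside the payoff differences, combine this with the linearity of $J(\pi(\nu),\mu)$ in $\nu$ for fixed $\mu$, and thereby identify the restricted-game exploitability of $\pi_i$ with the bilinear quantity $(A\nu)_i-\nu^\top A\nu$ controlled by the symmetric Nash condition. The only difference is cosmetic (matrix notation and the direction in which the chain of identities is read).
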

\begin{proof}
    Let $\nu$ be a symmetric Nash equilibrium of the normal-form game. Then we have that
    \begin{align*}
        \forall \pi' \in \{ \pi_1, ..., \pi_n \}, \sum_i \sum_j \nu_i \nu_j \left( J(\pi', \mu^{\pi_j}) - J(\pi_i, \mu^{\pi_j}) \right) &\leq 0 \\
        \sum_i \nu_i \Delta_{\pi', \pi_i}(J) \underbrace{\left(\sum_j \nu_j \mu^{\pi_j}\right)}_{= \mu(\nu)}  \leq \sum_i \sum_j \nu_i \nu_j \Delta_{\pi, \pi_i}(J)(\mu^{\pi}_j) &\leq 0 \\ 
        J(\pi, \mu(\nu)) - \sum_i \nu_i J(\pi_i, \mu(\nu)) &\leq 0 \\
        J(\pi, \mu(\nu)) - J(\pi(\nu), \mu(\nu)) &\leq 0 \\
    \end{align*}
    where the last line comes from the fact that $\pi(\nu)$ is exactly the policy resulting from sampling $\pi$ from $\nu$ at the start of every episode.
    Therefore $\pi(\nu)$ is a Nash equilibrium of the game if we restrict deviations to be within the set of the $(\pi_i)_i$.
\end{proof}

We must note one important corollary: since the Nash equilibrium of a $\mu$-diff-affine restricted game can be expressed as the symmetric Nash equilibrium of a 2-player symmetric normal-form game, then, according to the Nash Theorem, this Nash equilibrium always exists. This, in turn, guarantees the existence of correlated and coarse-correlated equilibria in $\mu$-diff-affine games. 

\begin{corollary}[Restricted game equilibrium existence]
    In a $\mu$-diff-affine restricted game, Nash, correlated and coarse-correlated equilibria always exist.
\end{corollary}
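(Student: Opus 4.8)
The plan is to prove Corollary (Restricted game equilibrium existence) by leveraging Theorem~\ref{theorem:nf_rg_nash_eq} to transport the classical normal-form existence results into the mean-field restricted game. The key observation is that under $\mu$-diff-affinity, deviation incentives in the restricted game can be rewritten purely in terms of the finite payoff matrix $M_{ij} = J(\pi_i, \mu^{\pi_j})$, so the restricted game ``is'' (for the purpose of equilibrium analysis) the symmetric two-player normal-form game with payoff matrix $(M, M^\top)$.

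First, I would recall that Nash's theorem guarantees the symmetric two-player game $(M, M^\top)$ admits a symmetric Nash equilibrium $\nu \in \Delta(\Pi_n)$; by Theorem~\ref{theorem:nf_rg_nash_eq}, $\pi(\nu)$ is then a Nash equilibrium of the restricted mean-field game, establishing the Nash case. Second, for the coarse-correlated case, I would take any (C)CE $\rho$ of the normal-form meta-game $(M, M^\top)$ — which exists because (C)CEs always exist in finite games (e.g.\ any Nash equilibrium is a CCE, and CEs exist by the minimax/LP argument of Hart--Schmeidler). The job is then to check that such a $\rho$, viewed as a correlation device in $\Delta(\Delta(\Pi_n))$ (identifying a product recommendation with the population distribution it induces), satisfies the MFCCE / MFCE inequalities from the definitions in Section~\ref{sec:definitions}. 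Concretely, I would write the MFCCE deviation incentive $\mathbb{E}_{\nu\sim\rho,\pi\sim\nu}[J(\pi',\mu(\nu)) - J(\pi,\mu(\nu))]$, apply $\mu$-diff-affinity (Property~\ref{property:diff_convexity_reward_condition}) to pull the difference inside as $\sum_\pi \nu(\pi)(J(\pi',\mu^\pi) - J(\pi,\mu^\pi))$ exactly as in the displayed chain of inequalities in the proof of Theorem~\ref{theorem:nf_rg_nash_eq}, and observe that the resulting expression is precisely the normal-form CCE deviation incentive for the matrix $M$, hence $\leq 0$. The MFCE case proceeds identically, conditioning on the recommendation $\pi$ and matching against the normal-form CE constraints.

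The main obstacle I anticipate is not any hard inequality but a bookkeeping/definitional subtlety: a correlation device in the mean-field sense lives in $\Delta(\Delta(\Pi_n))$ and recommends a whole population distribution $\nu$, whereas a normal-form correlated equilibrium of the two-player meta-game is a distribution over \emph{pairs} of pure policies $(\pi_i,\pi_j)$. One must argue that the right object to extract is the marginal structure where player 2's pure recommendation plays the role of the ``population generator'' (so its distribution is the $\nu$ being sampled) while player 1's recommendation plays the role of the sampled individual $\pi$ — and that the symmetry of $(M,M^\top)$, together with $\mu$-diff-affinity collapsing mixed-population payoffs to averages of pure-population payoffs, makes the two notions line up so that the normal-form (C)CE constraints imply the mean-field ones. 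Once this correspondence is set up cleanly, the verification is the same one-line computation already used for the Nash case, so the remainder is routine.

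Alternatively — and perhaps more cleanly — since Theorem~\ref{theorem:restricted_nash_existence} already gives existence of a restricted MFNE under mere continuity of $r$ in $\mu$ (which $\mu$-diff-affine rewards certainly satisfy), one may simply note that \emph{any} MFNE is trivially both an MFCE and an MFCCE (take $\rho = \delta_{\delta_{\pi(\nu)}}$, i.e.\ the degenerate correlation device that always recommends the equilibrium policy): the deviation-incentive inequalities for MFCCE and MFCE reduce exactly to the MFNE inequality in this case. This gives existence of all three equilibria immediately, and the $\mu$-diff-affinity hypothesis is only needed insofar as it is a special case of the continuity hypothesis. I would present the Nash-reduction argument as the primary proof (since it is the one consistent with the surrounding ``linear special case'' narrative and the meta-game construction), and mention the degenerate-device observation as the quick reason the (C)CE existence follows.
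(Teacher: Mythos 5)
Your primary argument is exactly the paper's: Nash's theorem yields a symmetric equilibrium of the two-player meta-game, Theorem~\ref{theorem:nf_rg_nash_eq} transports it to a restricted mean-field Nash equilibrium, and the correlated and coarse-correlated cases follow because any Nash equilibrium is in particular a (C)CE --- which is precisely the reduction the paper itself invokes in the sentence preceding the corollary. The only quibble is notational: the degenerate correlation device should be $\delta_\nu$ (always recommend the equilibrium population distribution $\nu \in \Delta(\Pi_n)$) rather than $\delta_{\delta_{\pi(\nu)}}$, but the verification is unchanged.
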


\subsubsection{Restricted-game coarse correlated equilibrium}

We define a restricted game coarse correlated equilibrium as a recommendation device $\rho$ which recommends population distributions $\nu \in \Delta(\Pi_n)$ such that $$ \max_{\pi_k} \sum_{\nu} \rho(\nu) \sum_i \sum_j \nu_i \nu_j \left(J(\pi_k, \mu_j) - J(\pi_i, \mu_j) \right) \leq 0$$ 

$$ \text{i.e. } \; \max_{e_k} \sum_{\nu} \rho(\nu) \left( e_k - \nu \right)^t J \; \nu \leq 0$$ 

We note that although the set $\Delta(\Pi)$ is not discrete in general, the above equation is written using a sum (Though we note it could also be written using an integral), the reason being algorithmic. Indeed, given $K$ the number of searched different $\nu$ with non-zero support in $\rho$, our optimization process searches for $K$ different $\nu_k \in \mathbb{R}^N$, and their distribution $\rho \in \mathbb{R}^K$, instead of searching over the infinite-dimensional space $\mathcal{P}(\Delta(\Pi))$. 

We propose below a maximum-margin solution with parameter $K$, which one can solve using Quadratic Programming by introducing intermediary variables
\begin{align*}
    \text{Objective:}& \;\;\;\; \min_{\rho, \nu_1, ..., \nu_K} \max_{e_l} \sum_{k} \rho_k \left( e_l - \nu_k \right)^t J \; \nu_k \\
    \text{Probability constraint:}& \;\;\;\;   \;\;\;\;\;\;\;\;\; \rho \geq 0, \;\; \underline{1}^t \rho = 1 \\
                                  & \;\;\;\; \forall k, \;\; \nu_k \geq 0, \;\;\underline{1}^t \nu_k = 1 \nonumber
\end{align*}

Though note that other objectives and formulations are possible, for example a maximum-entropy one. We note that the CCE constraint is quadratic, therefore QCP-capable solvers are required, though expect usual simplifications to hold.

\begin{align}
    \text{Objective:}& \;\;\;\; \min_{\rho, \nu_1, ..., \nu_K} \sum_{k=1}^K \rho_k \; \nu_k^t \log(\nu_k) \\
    \text{CCE-Constraint:}& \;\;\;\;  \forall l, \;\; \sum_{k} \rho_k \left( e_l - \nu_k \right)^t J \; \nu_k \leq 0 \\
    \text{Probability constraint:}& \;\;\;\;   \;\;\;\;\;\;\;\;\; \rho \geq 0, \;\; \underline{1}^t \rho = 1 \\
                                  & \;\;\;\; \forall k, \;\; \nu_k \geq 0, \;\;\underline{1}^t \nu_k = 1 \nonumber
\end{align}

Or maximum-entropy with KL-regularization imposing differences between population recommendations, for $0 \leq \lambda \leq 1$,

\begin{align}
    \text{Objective:}& \;\;\;\; \min_{\rho, \nu_1, ..., \nu_K} \sum_{k=1}^K \rho_k \left( \nu_k^t \log(\nu_k) - \lambda \sum_{k'=1}^K \nu_k^t \log(\frac{\nu_k}{\nu_{k'}}) \right) \\
    \text{CCE-Constraint:}& \;\;\;\;  \forall l, \;\; \sum_{k} \rho_k \left( e_l - \nu_k \right)^t J \; \nu_k \leq 0 \\
    \text{Probability constraint:}& \;\;\;\;   \;\;\;\;\;\;\;\;\; \rho \geq 0, \;\; \underline{1}^t \rho = 1 \\
                                  & \;\;\;\; \forall k, \;\; \nu_k \geq 0, \;\;\underline{1}^t \nu_k = 1 \nonumber
\end{align}

\subsubsection{Restricted-game correlated equilibrium}

We define a restricted game correlated equilibrium as a recommendation device $\rho$ which recommends population distributions $\nu \in \Delta(\Pi_n)$ such that 

\begin{align*}
\sum_i \max_{\pi_k} \sum_{\nu} \rho(\nu)  \sum_j \nu_i \nu_j \left(J(\pi_k, \mu_j) - J(\pi_i, \mu_j) \right) \leq 0 \\
\sum_i \max_{k} \sum_{\nu} \rho(\nu)  \nu_i \left(e_k - e_i \right)^t J \; \nu \leq 0 \\
\text{And thus, we have} \\
\forall i, k, \;\; \sum_{\nu} \rho(\nu)  \nu_i \left(e_k - e_i \right)^t J \; \nu \leq 0 \\
\end{align*}

As before, given a fixed number of different population distributions $K$, we suggest three different optimization objectives : A maximum-margin, quadratic optimization one

\begin{align*}
    \text{Objective:}& \;\;\; \min_{\rho, \nu_1, ..., \nu_K} \sum_i \max_{k} \sum_{\nu} \rho(\nu)  \nu_i \left(e_k - e_i \right)^t J \; \nu \\
    \text{Prob. constraint:}& \;\;\;   \;\;\;\;\;\;\;\;\; \rho \geq 0, \;\; \underline{1}^t \rho = 1 \\
                            & \;\;\; \forall k, \;\; \nu_k \geq 0, \;\;\underline{1}^t \nu_k = 1 \nonumber
\end{align*}

A maximum-entropy one

\begin{align}
    \text{Objective:}& \;\;\;\; \max_{\rho, \nu_1, ..., \nu_K} \sum_{k=1}^K \rho_k \; \nu_k^t \log(\nu_k) \\
    \text{CE-Constraint:}& \;\;\;\; \forall i, k, \;\; \sum_{\nu} \rho(\nu)  \nu_i \left(e_k - e_i \right)^t J \; \nu \leq 0 \\
    \text{Probability constraint:}& \;\;\;\;   \;\;\;\;\;\;\;\;\; \rho \geq 0, \;\; \underline{1}^t \rho = 1 \\
                                  & \;\;\;\; \forall k, \;\; \nu_k \geq 0, \;\;\underline{1}^t \nu_k = 1 \nonumber
\end{align}

Or a maximum-entropy with KL-regularization imposing differences between population recommendations, when $0 \leq \lambda \leq 1$

\begin{align}
    \text{Objective:}& \;\;\;\; \max_{\rho, \nu_1, ..., \nu_K} \sum_{k=1}^K \rho_k \left( \nu_k^t \log(\nu_k) - \lambda \sum_{k'=1}^K \nu_k^t \log(\frac{\nu_k}{\nu_{k'}}) \right) \\
    \text{CE-Constraint:}& \;\;\;\;  \forall i, k, \;\; \sum_{\nu} \rho(\nu)  \nu_i \left(e_k - e_i \right)^t J \; \nu \leq 0 \\
    \text{Probability constraint:}& \;\;\;\;   \;\;\;\;\;\;\;\;\; \rho \geq 0, \;\; \underline{1}^t \rho = 1 \\
                                  & \;\;\;\; \forall k, \;\; \nu_k \geq 0, \;\;\underline{1}^t \nu_k = 1 \nonumber
\end{align}

\subsection{Mean-Field PSRO: Convergence to Nash equilibria in diff-affine games}

MF-PSRO is defined in a very similar way to standard PSRO in diff-affine games: start with a restricted policy set $\Pi_0$, and, at each step $N$, compute the $\Pi_n$ restricted Nash equilibrium $\nu_n$, and compute a best response $\pi_{n+1}$ to this $\Pi_n$ mixed according to $\nu_n$. If $\pi_{n+1} \in \Pi_{n}$, then $\Pi_n$ mixed according to $\nu_n$ is a Nash equilibrium of the true game, otherwise the algorithm continues.

\begin{algorithm}%
\SetAlgoLined
\KwResult{Policy set  $\Pi^* = \{\pi_1, ..., \pi_n$\}, Policy Distribution $\nu^*$ yielding game Nash $\pi(\nu^*)$}
 $\Pi_1 = \{ \pi_1 \}$ with $\pi_1$ any policy, $\nu(\pi_1) = 1.0$, $N = 1$\;
 \While{$V^{BR(\mu^{\pi(\nu)}), \mu^{\pi(\nu)}} > V^{\pi(\nu), \mu^{\pi(\nu)}}$}{
  $\Pi_{n+1} = \Pi_{n} \cup \{ BR(\mu^{\pi(\nu)}) \}$ \;
  
  $n = n + 1$\;
  
  $\forall i, j \leq n, M_{i, j} = \mathbb{E}[J(\pi_i, \mu^{\pi_j})]$ \;
  
  $\nu = \text{Matrix Nash Solver}([M, M^t])$
 }
 \caption{MF-PSRO(Nash) (Diff-Convex case)}
\end{algorithm}

\subsection{Mean-Field PSRO: Convergence to (coarse) correlated equilibria in diff-affine games}

When the game is $\mu$-diff-affine, we have the following property

\begin{property} \label{}
    In a $\mu$-diff-affine game, any (coarse) correlated equilibrium of the restricted game is a (coarse) correlated equilibrium of the True game when deviations are restricted to the set of known policies $(\pi_n)_n$
\end{property}

\begin{proof}
    Since the game is $\mu$-diff-affine, for all $\nu \in \Delta(\Pi_n)$, $\pi_k \in \Pi_n$ we have $$ J(\pi_k, \mu(\nu)) - J(\pi_i, \mu(\nu)) =  \sum_j \nu_j \left( J(\pi_k, \mu_j) - J(\pi_i, \mu_j) \right) $$
    
    Let $\rho$ be the correlation device of an MFCE of the restricted game. Then 
    
    \begin{align*}
        \sum_{i} \max_{\pi_k \in \Pi_n} \sum_{\nu} \rho(\nu) \nu_i \underbrace{\sum_j \nu_j \left( J(\pi_k, \mu_j) - J(\pi_i, \mu_j) \right)}_{\geq J(\pi_k, \mu(\nu)) - J(\pi_i, \mu(\nu))} &\leq 0 \\
        \sum_{i} \max_{\pi_k \in \Pi_n} \sum_{\nu} \rho(\nu) \nu(\pi_i) \left( J(\pi_k, \mu(\nu)) - J(\pi_i, \mu(\nu)) \right) &\leq 0
    \end{align*}
    therefore $\rho$ is a mean-field correlated equilibrium.
    
    Let $\rho$ be the correlation device of an MFCCE of the restricted game. Then 
    \begin{align*}
        \max_{\pi_k \in \Pi_n} \sum_{i} \sum_{\nu} \rho(\nu) \nu_i \underbrace{\sum_j \nu_j \left( J(\pi_k, \mu_j) - J(\pi_i, \mu_j) \right)}_{\geq J(\pi_k, \mu(\nu)) - J(\pi_i, \mu(\nu))} &\leq 0 \\
        \max_{\pi_k \in \Pi_n} \sum_{\nu} \rho(\nu) \left( J(\pi_k, \mu(\nu)) - J(\pi(\nu), \mu(\nu)) \right) &\leq 0
    \end{align*}
\end{proof}

Our algorithm is thus 

\begin{algorithm}%
\SetAlgoLined
\KwResult{Policy set  $\Pi^* = \{\pi_1, ..., \pi_n$\}, Policy Distribution $\nu^*$ yielding game Nash $\pi(\nu^*)$}
 $\Pi_1 = \{ \pi_1 \}$ with $\pi_1$ any policy, $\nu(\pi_1) = 1.0$, $N = 1$\;
 \While{$V^{BR_{(C)CE}(\mu^{\pi(\nu)}), \mu^{\pi(\nu)}} > V^{\pi(\nu), \mu^{\pi(\nu)}}$}{
  $\Pi_{n+1} = \Pi_{n} \cup \{ BR_{(C)CE}(\mu^{\pi(\nu)}) \}$ \;
  
  $N = N + 1$\;
  
  $\forall i, j \leq N, M_{i, j} = \mathbb{E}[J(\pi_i, \mu^{\pi_j})]$ \;
  
  $\nu = \text{Restricted-game mean-field (C)CE Solver}(M)$
 }
 \caption{MF-PSRO((C)CE) (Diff-Convex case)}\label{alg:mf_psro_cce_diff_conv}
\end{algorithm}

\section{Bandit Compression Sparsity}\label{appendix:bandit_compression_sparsity}

Figure \ref{fig:compression_sparsity} illustrates the sparsity of the bandit compression distribution, computed on the same example as figure \ref{fig:average_vs_compressed}.
\begin{figure}
    \centering
    \includegraphics[scale=0.4]{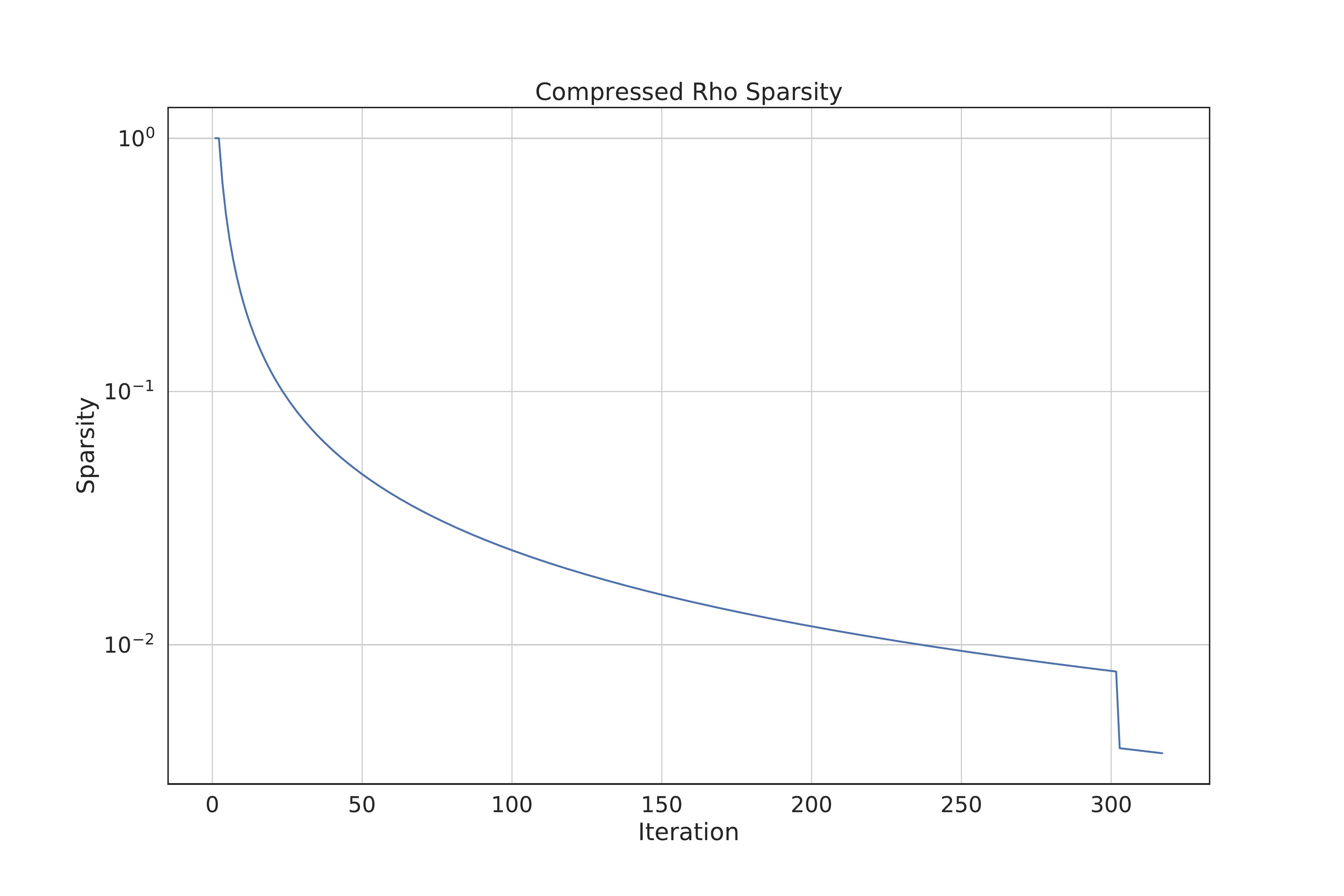}
    \caption{Bandit compression: sparsity / time.}
    \label{fig:compression_sparsity}
\end{figure}

\section{Value-Continuity of Minimax : Examples}\label{appendix:continuity_minimax_examples}

Assuming each $\epsilon_i$ is a random gaussian variable with variance $\sigma > 0$, the term $\max_i \rho^t_\Delta \epsilon_i$ is such that $\mathbb{P}(\max_i \rho^t_\Delta \epsilon_i \leq y) = \Phi^{n} \left( \frac{y}{\sigma \sqrt{\rho_\Delta^t \rho_\Delta}} \right)$ where $\Phi$ is a standard gaussian CDF, and the term on the right $\rho^t_\epsilon \epsilon_{i_\Delta} \sim \mathcal{n}(0, \frac{\sigma^2}{\rho^t_\epsilon \rho_\epsilon})$.

To get an estimation of the magnitude of this gap's distribution, we assume $N$ to be high enough that we can ignore the term $\rho^t_\epsilon \epsilon_{i_\Delta}$ for our numerical application.
Using the $5 \sigma$ rule, we find that $\mathbb{P} \left(\Delta_O \leq 5\sigma\sqrt{\rho_\Delta^t \rho_\Delta} \right ) \geq 0.9999994^N $. Assuming $\sigma = 0.1$, $N = 50$ and $\rho_\Delta^t \rho_\Delta = \frac{1}{50}$ (Fully uniform distribution), $\mathbb{P} \left(\Delta_O \leq 0.07 \right ) \geq 0.99997$. When $\rho_\Delta$ is fully focused on one point, $\rho_\Delta^t \rho_\Delta = 1$, and the former equation becomes $\mathbb{P} \left(\Delta_O \leq 0.5 \right ) \geq 0.99997$

We note that this is a pessimistic estimate for several reasons \begin{itemize}
    \item $\rho$ should presumably not be focused on a single point, and therefore the term $5 \sigma \sqrt{\rho^t \rho}$ will be quite lower
    \item This does not take into account the complex relationship and dependence between the two terms $\max_i \rho_\Delta^t \epsilon_i$ and $\rho^t_\epsilon \epsilon_{i_\Delta}$
\end{itemize}

\section{Experiments: Details and Additional Results}\label{appendix:experiment_details}

\subsection{Game description and motivation}\label{subappendix:game_description_motivation}

Both games have 3 actions, $A$, $B$ and $C$, whose rewards depend on the action distribution of the population. 
For \emph{mean-field Biased Indirect Rock Paper Scissors}, we have
$$r(A, \mu) = 0.5 * \mu(B) - 0.3 * \mu(C)$$
$$r(B, \mu) = 0.3 * \mu(C) - 0.7 * \mu(A)$$
$$r(C, \mu) = 0.7 * \mu(A) - 0.5 * \mu(B)$$

For \emph{Coop / Betray / Punish}, we have 
$$r(A, \mu) = \mu(A) - \frac{20}{9} (\mu(A) - \mu(C)) * \mu(C) - 2 \mu(B)$$
$$r(B, \mu) = 2 (\mu(A) - \mu(B)) - 238 \mu(C)$$
$$r(C, \mu) = \frac{200}{9} (\mu(A) - \mu(C)) * \mu(C)$$

\subsection{Sped-up mean-field PSRO parameter effects}

\begin{center}
    \begin{tabular}{ c | c }
         PSRO Parameter & Effect when Increased \\
         \hline $T$ & Improved asymptotic convergence, lowers speed \\  
         $M$ & Lower noise, improves convergence at fixed $T$, lower speed \\
         $\rho_{\text{Tol}}$ & Lower precision, higher speed \\
         $\tau_{Compress}$ & Higher speed if costly compression, otherwise lower \\
    \end{tabular}
\end{center}

\subsection{MF-PSRO(Nash)} \label{subappendix:zoomed_out_mf_psro_nash}

Figure \ref{fig:exploitabilities_zoomed_out} shows exploitability of MF-PSRO(Nash), OMD for several learning rates, and Fictitious Play.

\begin{centering}
\begin{figure}\hspace{-1.5cm}
    \includegraphics[scale=0.5]{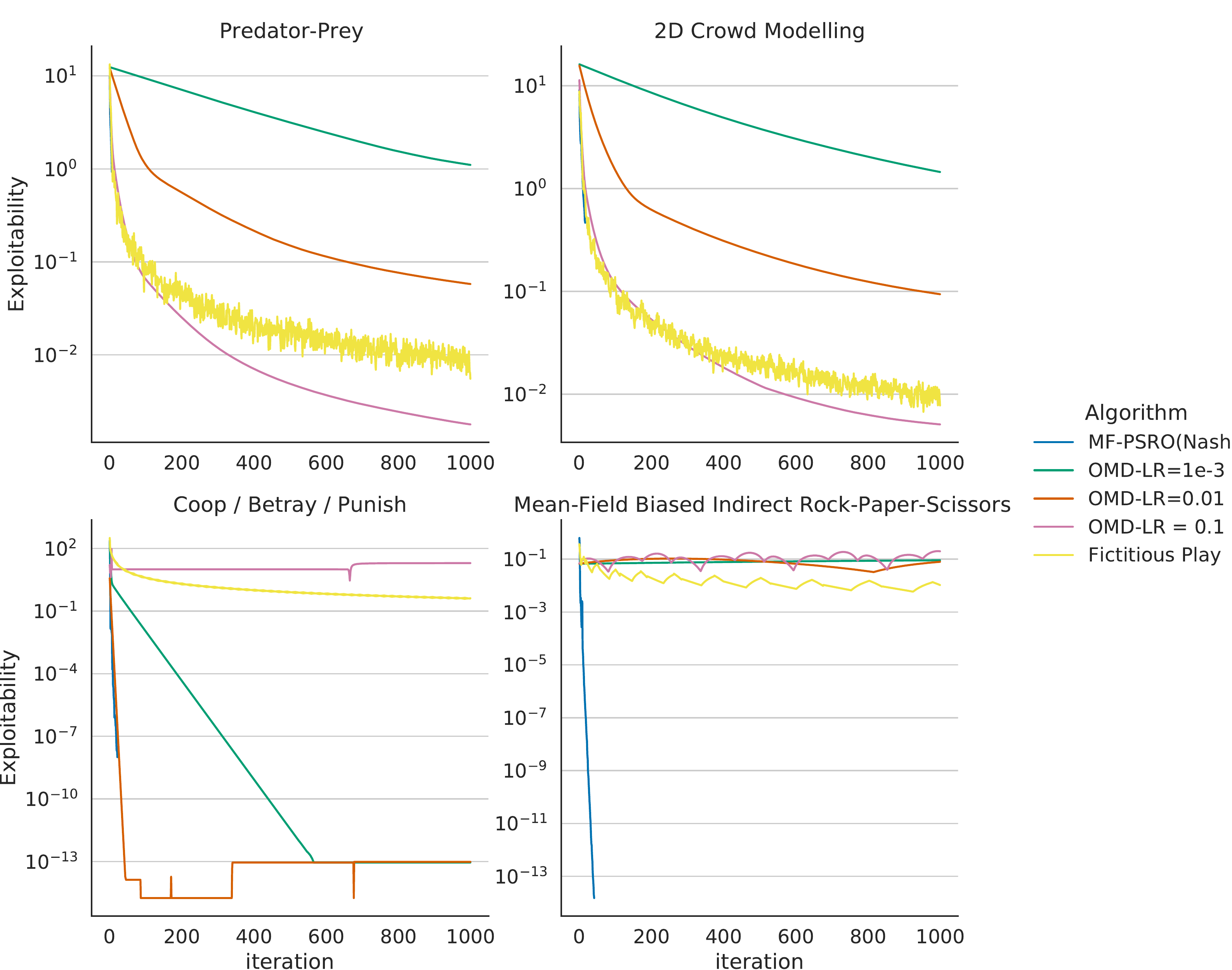}
    \caption{Algorithm Exploitabilities}
    \label{fig:exploitabilities_zoomed_out}
\end{figure}
\end{centering}

\section{On No-Regret Learners, Bandit Compression and Sped-Up mean-field PSRO}\label{appendix:no_regret_bandit_compression_and_psro}

\subsection{Commentary on Complexity Analysis}\label{subappendix:complexity_comments}

\begin{itemize}
    \item Minimizing swap regret directly has higher complexity than payoff matrix estimation by a factor $n^2 log(n)$ in worst cases.
    \item Minimizing external regret directly has higher complexity than payoff matrix estimation by a factor $log(n)$ in worst cases.
    \item Using regret minimizers directly provides the user with a useable distribution over policies
    \item Estimating the payoff matrix means the user still has to run an algorithm over said payoff matrix, which could have large complexity (Linear solvers have complexity $O(n^{2 + \gamma})$ with $\gamma > 0$, for example)
    \item The relationship between payoff uncertainty and solver output uncertainty is difficult to analyze in general, due to the strong non-linearities of solvers. Indeed, picture the following 0-sum game: three strategies face off, $\pi_1$ has payoff 100 against $\pi_2$ and $\pi_3$, and $\pi_2$ has payoff 1 against $\pi_3$. Any reasonable $\epsilon$-error in estimating the payoff obtained by playing $\pi_1$ would not change e.g. its Nash distribution, or the distribution of a correlated equilibrium. In contrary, in a game where all average payoffs are very close to 0 ($\pi_1$ barely beats $\pi_2$ and $\pi_3$, and $\pi_2$ barely beats $\pi_3$), an $\epsilon$-error could lead to a reversal of these interactions (In the noisy payoff matrix, it could be that $\pi_2$ beats $\pi_3$ which beats $\pi_1$), thus completely changing the computed distribution.
    \item We do not yet fully understand the complexity reduction granted by Bandit Compression, which could greatly lower asymptotic complexity of the Bandit approach.
    \item This complexity insight can be transferred to the N-player case. In this case, one needs to compute $(n+1)^N - n^N = O(n^{N-1})$ matches, and the estimation complexity is therefore $T = O\left(\frac{n^{N-1}}{\alpha \epsilon^2}\right)$. The number of different actions is $n^N$, so the complexity of minimizing internal regret is $O\left(\frac{N n^{3N} log(n)}{\epsilon^2}\right)$ and it is $O \left (\frac{N n^N log(n)}{\epsilon^2} \right)$ for external regret minimization.
    \item If we can observe all policies' payoffs at no additional cost, the regret bounds become $O\left(\frac{n \; log(n)}{\epsilon^2}\right)$ for internal regret, and $O\left(\frac{log(n)}{\epsilon^2}\right)$ for external regret. The complexity for payoff matrix estimation nevertheless remains $O(\frac{n+1}{\alpha \epsilon^2})$, as one is interested in $(J(\pi_n, \mu^{\pi_k}))_k$ and $(J(\pi_k, \mu^{\pi_n}))_k$
\end{itemize}

\subsection{Improvements on sped-up PSRO algorithm}\label{subappendix:speed_improvement_details}

Sped-Up PSRO includes the following new features:

\begin{itemize}
    \item $\rho_{tol}$: This term is a regret threshold. If the optimal solution of Problem \ref{eq:bandit_cce_optim_problem} or \ref{eq:bandit_ce_optim_problem} yields regret lower than this term, we consider the equilibrium search as successful.
    \item $\rho_{lim}$ and new loop conditions: The PSRO loop does not terminate anymore when $\Pi_{n+1} == \Pi_{n}$, what it does then is \emph{refine} its current equilibrium by halving $\rho_{tol}$ at every iteration where $\Pi_{n+1} == \Pi_{n}$, until $\rho_{tol} == \rho_{lim}$, with $\rho_{lim}$ set to a very low value, typically $10^{-12}$
    \item $\tau_{Compress}$: Typically set to 1, this value allows one not to optimize problems \ref{eq:bandit_cce_optim_problem} or \ref{eq:bandit_ce_optim_problem} at every regret minimization step. This can be especially useful when computing MFCEs, for which the problem is much slower to solve than for MFCCEs.
\end{itemize}

\subsection{Proof of Complexity in Noisy Payoff case}\label{subappendix:proof_noisy_payoff}

In the case where payoffs are additively noisy, Regret can be decomposed into two terms: $\text{Regret}_i = R_i + \tilde R_i$, where $R$ is the true, noiseless regret, and $\tilde R_i$ is the noise-derived regret. Given $\alpha > 0$, after $O\left(\frac{M \; n^{3} \; log(n)}{\alpha^2}\right)$ steps, we know that $\text{Regret} \leq \frac{\alpha}{2}$.
We have that $\text{True Regret} - \text{Regret} = \max_j R_j - (\max_i R_i + \tilde R_i) \leq \max_i -\tilde R_i$, and $\mathbb{P}(\max_i -\tilde R_i \geq \alpha) \leq \sum_i \mathbb{P}(-\tilde R_i \geq \alpha)$. Given that $\tilde R_i = \frac{1}{T} \sum_t \epsilon_i[t] - \nu_t^t \epsilon[t]$, and $\epsilon_t$ is averaged over $M$ samples and thus has $\frac{\sigma^2}{M}$ variance, then $\text{Var}(-\tilde R_i) \leq \frac{\sigma^2}{TM}$ and Chebyshev's inequality yields $\mathbb{P}(-\tilde R_i \geq \frac{\alpha}{2}) \leq \frac{4\sigma^2}{TM \alpha^2}$, thus yielding $\mathbb{P}(\max_i -\tilde R_i \geq \alpha) \leq n \frac{4\sigma^2}{TM \alpha^2}$.
We then have $\mathbb{P}(\text{True Regret} \leq \alpha) \geq \mathbb{P}(\text{Regret} + \max_i -\tilde R_i \leq \alpha) \leq \mathbb{P}(\max_i -\tilde R_i \leq \frac{\alpha}{2}) \leq 1 - n \frac{4\sigma^2}{TM \alpha^2}$ The probability of the true regret being lower than $\alpha$ after $O\left(\frac{M \; n^{3} \; log(n)}{\alpha^2}\right)$ steps is therefore $\delta \geq 1 - n \frac{\sigma^2}{TM \alpha^2}$. Since each regret steps is now composed of $M$ times as many rollouts, the total rollout-complexity of the algorithm must be multiplied by $M$, which concludes the proof.

\subsection{On the complexity of improving Welfare}\label{subappendix:welfare_complexity}

We have so far introduced a method that learns Nash, correlated and coarse correlated equilibria in mean-field games. A subsequent question for correlated and coarse correlated equilibria is, could we influence the learning process for it to find high-welfare equilibria instead of low-welfare ones ?

This problem of high-welfare convergence was shown by \cite{barman2015finding} to be NP-hard in general, with the notable exception of succinct aggregate games, for which the existence of polynomial algorithms converging to high-welfare equilibria is proven. However, their method relies on a discretization of and grid-search over the aggregate space, the space of statistics summarizing the behavior of other players. 

At the $n$-th step of mean-field PSRO, discretizing the $n$-dimensional probability vector space with step size $\frac{1}{M}$ amounts to considering matrices of size $ \geq n \left(\frac{M}{2}\right)^n$, a complexity exponential in the number of iterations, therefore prohibitive.

We therefore leave open the question of high-welfare convergence for now.

\subsection{Optimality of $\rho_*$ for correlated equilibria}\label{subappendix:rho_star_optimality_ce}

For restricted MFCEs:

The deviation incentive against policy $\pi_i$ recommended by the correlation device sampling $\nu_t$ with probability $\rho_t$ in the restricted game is 
\begin{align*}
    \max_{\pi, \pi'} \rho(\pi) \mathbb{E}_{\nu \sim \rho(\cdot | \pi)} &\left[ J(\pi', \mu(\nu)) - J(\pi, \mu(\nu)) \right] \\
    &= \max\limits_{i, j} \sum\limits_t \rho_t \nu_t(i) \Big( J(\pi_j, \mu(\nu_t)) - J(\pi_i, \mu(\nu_t)) \Big) \\
    &= \max\limits_{i, j} \rho^t \text{Regret}_{i, j}
\end{align*}
Since the uniform distribution is a possible solution of Problem \ref{eq:bandit_ce_optim_problem}, we thus have that the average max deviation incentive against $\rho^*$ the solution of Problem \ref{eq:bandit_ce_optim_problem} is lower than or equal to that of the uniform distribution, which concludes this part of the proof.

\section{On Monotonicity in Restricted Games}

\begin{property}
    A game is monotonic if and only if all its restricted games are.
\end{property}
\begin{proof}
    Assume all restricted games are monotonic, take $\pi_1, \pi_2$ two policies of the game, and take the monotonic game containing only $\pi_1, \pi_2$. By assumption, it is monotonic, ie. $\forall \nu_1, \; \nu_2 \in \Delta\left(\{ \pi_1, \pi_2 \}\right)$, $$J_r(\nu_1, \nu_1) + J_r(\nu_2, \nu_2) - J_r(\nu_1, \nu_2) - J_r(\nu_2, \nu_1) \leq 0$$ with $J_r(\nu, \nu') = J(\pi(\nu), \mu(\nu'))$. It suffices to take $\nu_1 = \delta_{\pi_1}$ and $\nu_2 = \delta_{\pi_2}$ to directly have $$J(\pi_1, \mu^{\pi_1}) + J(\pi_2, \mu^{\pi_2}) - J(\pi_2, \mu^{\pi_1}) - J(\pi_1, \mu^{\pi_2}) \leq 0$$ and since this is true for all $\pi_1, \; \pi_2$, the game is monotonic.
    
    Assume the game is monotonic. Take $\pi_1, ..., \pi_N$ with $N > 0$ and consider their derived restricted game. Let $\nu_1, \nu_2 \in \Delta(\{ \pi_1, ..., \pi_N \})$. $$J_r(\nu_1, \nu_1) + J_r(\nu_2, \nu_2) - J_r(\nu_2, \nu_1) - J_r(\nu_1, \nu_2) = J(\pi(\nu_1), \mu(\nu_1)) + J(\pi(\nu_2), \mu(\nu_2)) - J(\pi(\nu_2), \mu(\nu_1)) - J(\pi(\nu_1), \mu(\nu_2))$$ given that $\forall \nu, \; \mu(\nu) = \mu^{\pi(\nu)}$. Since $\pi(\nu_1)$ and $\pi(\nu_2)$ are both policies of the true game, and the true game is monotonic, $$ J(\pi(\nu_1), \mu(\nu_1)) + J(\pi(\nu_2), \mu(\nu_2)) - J(\pi(\nu_2), \mu(\nu_1)) - J(\pi(\nu_1), \mu(\nu_2)) \leq 0$$ and thus $$ J_r(\nu_1, \nu_1) + J_r(\nu_2, \nu_2) - J_r(\nu_2, \nu_1) - J_r(\nu_1, \nu_2) \leq 0 $$ which concludes the proof.
\end{proof}

\end{document}